\theoremstyle{plain}
\newtheorem{theorem}{Theorem}[section]
\theoremstyle{definition}
\newtheorem{definition}[theorem]{Definition}
\theoremstyle{remark}
\icmltitlerunning{Transformer-Based Spatial-Temporal Counterfactual Outcomes Estimation}
\begin{document}

\twocolumn[
\icmltitle{Transformer-Based Spatial-Temporal Counterfactual Outcomes Estimation}

% It is OKAY to include author information, even for blind
% submissions: the style file will automatically remove it for you
% unless you've provided the [accepted] option to the icml2025
% package.

% List of affiliations: The first argument should be a (short)
% identifier you will use later to specify author affiliations
% Academic affiliations should list Department, University, City, Region, Country
% Industry affiliations should list Company, City, Region, Country

% You can specify symbols, otherwise they are numbered in order.
% Ideally, you should not use this facility. Affiliations will be numbered
% in order of appearance and this is the preferred way.
\icmlsetsymbol{equal}{*}

\begin{icmlauthorlist}
\icmlauthor{He Li}{equal,yyy}
\icmlauthor{Haoang Chi}{equal,sch,yyy}
\icmlauthor{Mingyu Liu}{yyy}
\icmlauthor{Wanrong Huang}{yyy}
\icmlauthor{Liyang Xu}{yyy}
\icmlauthor{Wenjing Yang}{yyy}
%\icmlauthor{Firstname7 Lastname7}{comp}
%\icmlauthor{}{sch}
%\icmlauthor{Firstname8 Lastname8}{sch}
%\icmlauthor{Firstname8 Lastname8}{yyy,comp}
%\icmlauthor{}{sch}
%\icmlauthor{}{sch}
\end{icmlauthorlist}

\icmlaffiliation{yyy}{College of Computer Science and Technology, National University of Defense Technology, Changsha, China}
%\icmlaffiliation{comp}{Company Name, Location, Country}
\icmlaffiliation{sch}{Intelligent Game and Decision Lab, Academy of Military Science, Beijing, China}

\icmlcorrespondingauthor{Wenjing Yang}{wenjing.yang@nudt.edu.cn}
%\icmlcorrespondingauthor{Firstname2 Lastname2}{first2.last2@www.uk}

% You may provide any keywords that you
% find helpful for describing your paper; these are used to populate
% the "keywords" metadata in the PDF but will not be shown in the document
\icmlkeywords{Machine Learning, ICML}

\vskip 0.3in
]

% this must go after the closing bracket ] following \twocolumn[ ...

% This command actually creates the footnote in the first column
% listing the affiliations and the copyright notice.
% The command takes one argument, which is text to display at the start of the footnote.
% The \icmlEqualContribution command is standard text for equal contribution.
% Remove it (just {}) if you do not need this facility.

%\printAffiliationsAndNotice{}  % leave blank if no need to mention equal contribution
\printAffiliationsAndNotice{\icmlEqualContribution} % otherwise use the standard text.

\begin{abstract}
The real world naturally has dimensions of time and space. Therefore, estimating the counterfactual outcomes with spatial-temporal attributes is a crucial problem.    
However, previous methods are based on classical statistical models, which still have limitations in performance and generalization. 
This paper proposes a novel framework for estimating counterfactual outcomes with spatial-temporal attributes using the Transformer, exhibiting stronger estimation ability.
Under mild assumptions, the proposed estimator within this framework is consistent and asymptotically normal.
To validate the effectiveness of our approach, we conduct simulation experiments and real data experiments.
Simulation experiments show that our estimator has a stronger estimation capability than baseline methods. Real data experiments provide a valuable conclusion to the causal effect of conflicts on forest loss in Colombia.
The source code is available at this \href{https://github.com/lihe-maxsize/DeppSTCI_Release_Version-master}{URL}.
\end{abstract}

\section{Introduction}
Causal inference plays a vital role in various fields, such as epidemiology \citep{lawlor2008mendelian,robins2000marginal} and economics \citep{baum2015causal,dague2019causal}. 
Understanding and utilizing causality helps us reveal the mechanisms of the physical world, predict the occurrence of events in the real world, and manipulate their outcomes.
Counterfactual outcome prediction \cite{prosperi2020causal,wang2025modelpredictivetasksampling} is a promising direction in causal inference.
In practice, some causal problems have spatial-temporal attributes, such as the well-known ``Butterfly Effect'': a butterfly flapping its wings in Brazil may eventually lead to a tornado in the United States.
As we know, time and space are closely entangled and inseparable \citep{einstein1915feldgleichungen,einstein1922grundlage}, and the physical world naturally possesses spatial-temporal attributes.
Thus, it is crucial to perform counterfactual outcomes estimation with spatial-temporal attributes.

However, the classical causal inference frameworks \citep{pearl2010introduction, rubin1974estimating} cannot be directly applied to estimate counterfactual outcomes with spatial-temporal attributes.
A few studies \citet{christiansen2022toward,papadogeorgou2022causal} have initially explored causal inference methods for spatial-temporal data based on classical statistical models.
\citet{christiansen2022toward} propose to use the structural causal models for spatial-temporal data. 
However, in their setting, current outcomes are mainly affected by current treatment.
Thus, their estimands primarily focus on the causal effects simultaneously, which does not fully address the issue of temporal carryover effects.
\citet{papadogeorgou2022causal} extend the potential outcomes framework to the spatial-temporal setting. 
Although they propose the spatial-temporal potential outcomes framework, they don't explicitly propose a method for the computation of propensity scores with spatial-temporal attributes, which is essential for outcomes estimation.
Besides, they use classical statistical methods, such as kernel methods, which could suffer from complicated data patterns. The kernel methods rely on correctly specified kernel functions and smoothing parameters to ensure optimal performance.
Thus, previous works still have limitations in performance and generalization.

\begin{figure*}[t]
  \centering
  \begin{subfigure}{0.4\textwidth}
    \centering
    \includegraphics[width=\textwidth]{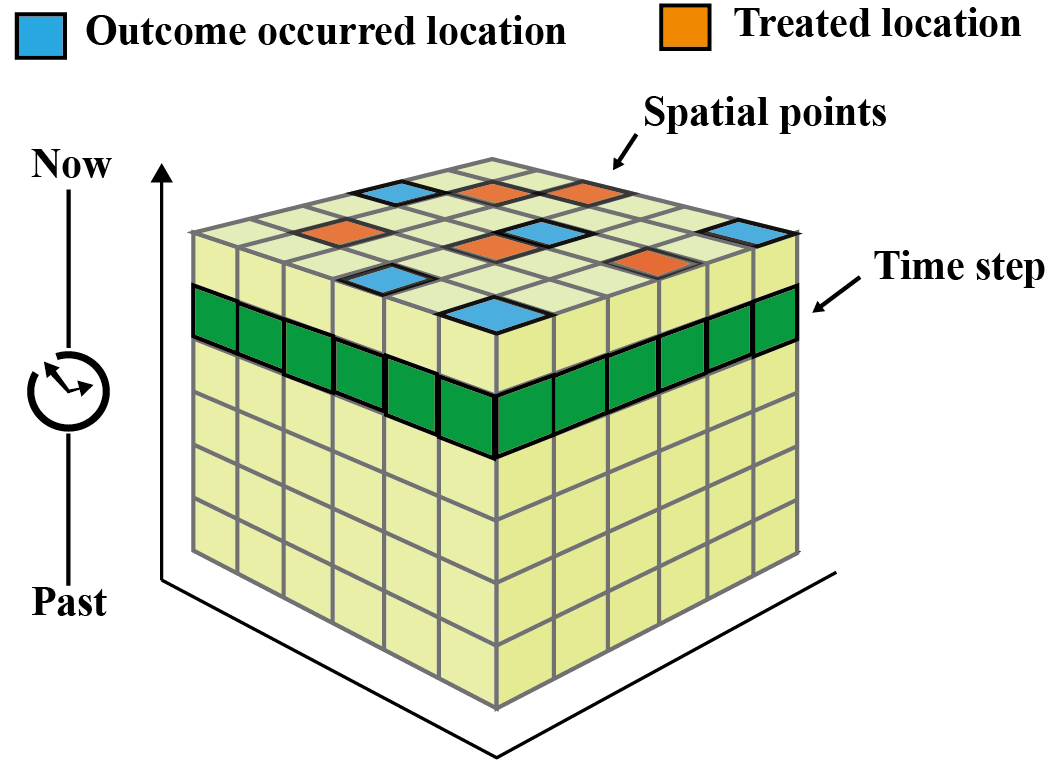}
    \caption{Studied spatial-temporal data.}
    \label{fig1sub1}
  \end{subfigure}
  \begin{subfigure}{0.4\textwidth}
    \centering
    \includegraphics[width=\textwidth]{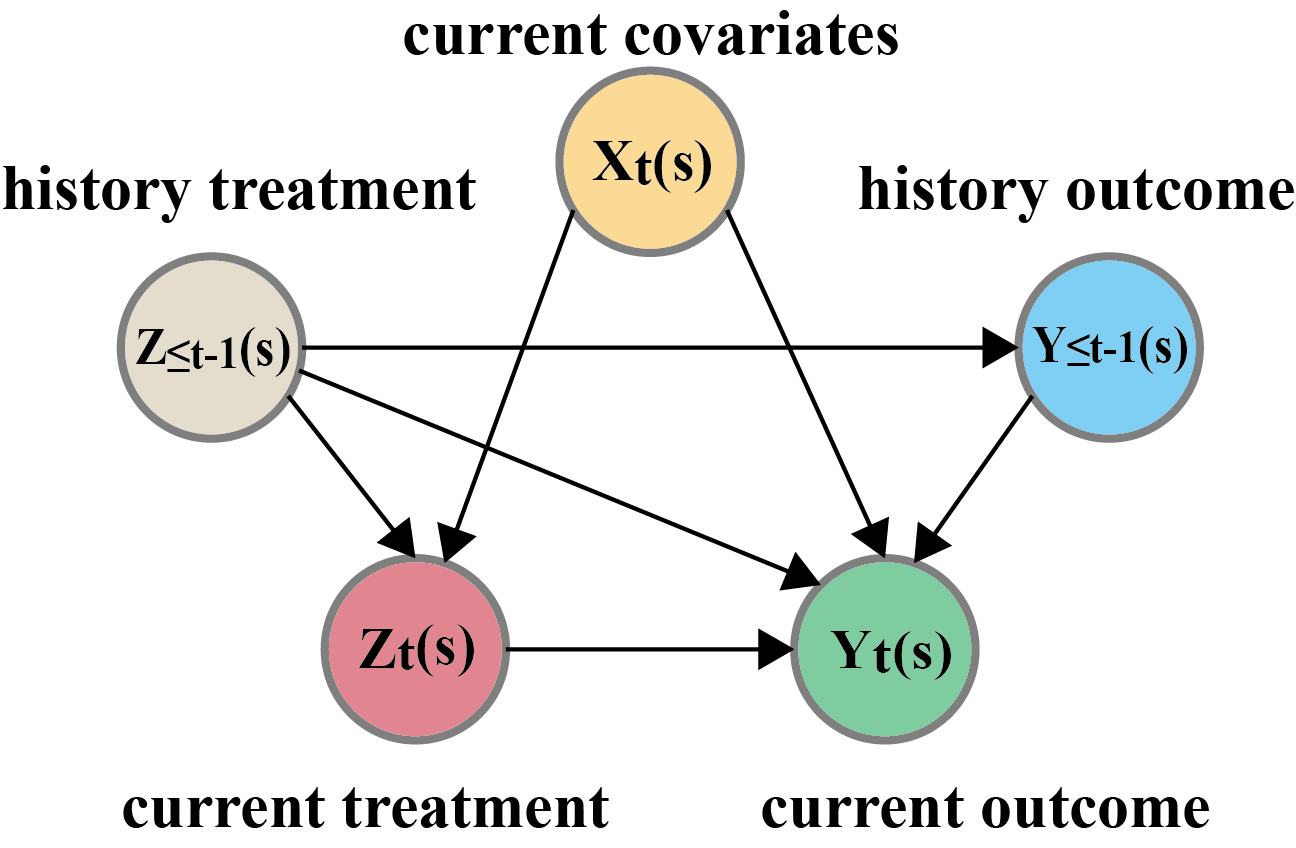}
    \caption{Causal structure of the studied problem}
    \label{fig1sub2}
  \end{subfigure}
  \caption{Overview of the studied problem. The left figure demonstrates the studied spatial-temporal data. Each layer in the cube represents a spatial point pattern in a time step. The blue blocks represent the outcomes that occurred, while the orange blocks indicate the treated locations. The point patterns at multiple time steps from the past to the present constitute spatial-temporal data, which can be viewed as high-dimensional tensors. The right figure illustrates the causal structure of the studied problem, where $s$ refers to spatial location and $t$ represents time.}
  \label{fig1}
\end{figure*}

In this work, we propose a novel framework for the estimation of counterfactual outcomes with spatial-temporal attributes using deep learning. 
An overview of the studied problem is shown in Figure \ref{fig1}, the studied spatial-temporal data is the time series of point patterns.
Our objective is to estimate the expected number of occurrences of the outcome events within a specific region under a counterfactual treatment assignment strategy.
The counterfactual treatment assignment strategy refers to a strategy that does not exist in the observational data. In other words, we aim to investigate ``what will happen to outcome events when we employ other treatment strategies?".
Accordingly, we propose deep-learning-based estimators within a spatial-temporal causal inference framework motivated by \citet{papadogeorgou2022causal}.
The proposed estimators account for the effects of past treatments on current outcomes, thus capturing the temporal causal effects.
Besides, under mild assumptions, our estimators are based on inverse probability weighting and two innovations.
First, based on convolutional neural networks (CNNs), we propose an efficient method to compute propensity scores when both treatment and covariates are high-dimensional, such as point pattern series. 
Second, we employ a Transformer-based model to estimate the intensity functions of point processes that characterize spatial-temporal data.
Moreover, our deep-learning-based estimators within this framework exhibit excellent statistical properties, such as consistency and asymptotic normality.

To validate the effectiveness of our approach, we conduct simulations and real data experiments.
Simulation experiments demonstrate that our deep-learning-based estimator exhibits lower estimation bias than the four baseline methods.
Real data experiments study the causal effect of conflicts on forest loss in Colombia from the year 2002 to the year 2022. 
The results conclude that the longer duration and more intensity of conflicts will cause more forest loss in response.

Overall, our contributions are summarized as follows:
 
\begin{itemize}
    \item[$\bullet$] We study counterfactual outcomes estimation with the spatial-temporal attribute, a more general setting, and propose an effective deep-learning-based solution.  
    \item[$\bullet$] We propose an efficient CNN-based method to address the calculation of propensity scores when both treatment and covariates are high-dimensional, such as point patterns. In addition, we use the Transformer to model the intensity functions of point processes, used to characterize spatial-temporal data.
    \item[$\bullet$] We empirically demonstrate the effectiveness of our approach by both simulated and real experiments. The real data experiments study the cause-and-effect of conflicts on forest loss in Colombia, revealing a valuable conclusion: longer and more intense conflicts lead to an increase in forest loss.
\end{itemize}

\section{Related Works}\label{appen:related_work}

\subsection{Temporal Causal Inference}
Temporal causal inference aims to investigate the causal relationships among time series data. 
Several studies have focused on estimating counterfactual outcomes or treatment effects of time-varying treatments. 
Examples include Granger Causality \citep{granger1969investigating}, Marginal Structural Models (MSMs) \citep{robins2000marginal}, and Recurrent Marginal Structural Networks (RMSNs) \citep{lim2018forecasting}.
However, time-varying confounders may lead to bias in the estimation.  
To address the problem of the time-varying confounders, several deep learning-based methods, such as Causal Transformer, were proposed \citep{bica2020time, bica2020estimating, li2020g, liu2020estimating, melnychuk2022causal, vo2021causal}. There exist methods targeting other types of sequential data, such as natural language \cite{chi2024unveiling} and temporal omics data \cite{zhang2024causal}.
While these methods provide strong theoretical foundations for temporal causal inference, their models do not include spatial dimensions.

\subsection{Temporal Counterfactual Prediction}
Temporal counterfactual prediction refers to performing counterfactual outcome prediction under time-varying settings. \citet{seedat2022continuous} propose TE-CDE, a neural-controlled differential equation approach for counterfactual outcome estimation in irregularly sampled time-series data. \citet{wu2024counterfactual} introduce a conditional generative framework for counterfactual outcome estimation under time-varying treatments, addressing challenges in high-dimensional outcomes and distribution mismatch. \citet{el2024causal} propose an RNN-based approach for counterfactual regression over time, focusing on long-term treatment effect estimation.

\subsection{Spatial Causal Inference}
Spatial data is commonly found in the environment, such as air quality data in a region and the incidence rate of a disease in a region \citep{cressie2015statistics}.
Spatial data often includes spatial correlation and heterogeneity, which challenge treatment effect estimation \citep{akbari2023spatial,ning2018bayesian}.
To address the problem of spatial correlation, \citet{jarner2002estimation} eliminate the effect of known covariates by matching the treatment group and the control group and estimating the latent spatial confounding under this design.
Concerning the heterogeneity treatment, Causal Forest \citep{wager2018estimation} is a non-parametric random forest-based method for heterogeneity treatment effects estimation. 
The above methods ignore the temporal dimension, while we focus on spatial-temporal data. 

\subsection{Spatial-Temporal Causal Inference} \label{stmethod}
Spatial-temporal data refers to the data consisting of spatial and temporal information \citep{cressie2015statistics}.
In the real world, spatial-temporal data, such as temperature variation over time in a region, is ubiquitous, and many of them are time series of remote-sensing images.
However, there has been limited research on the causal inference of spatial-temporal data until now.
\citet{christiansen2022toward} extended the structure causal model to make it applicable to spatial-temporal data.
However, in their settings, current outcomes are mainly affected by the current treatments, which does not fully solve the temporal carryover effect.
\citet{papadogeorgou2022causal} extended the potential outcome framework on spatial-temporal data and employed kernel functions to estimate the treatment effects. 
However, they didn't explicitly propose a method to compute propensity scores with spatial-temporal attributes, which is essential for treatment effects estimation. In practice, spatial-temporal causality is effective for better understanding multi-modal data, such as video \cite{li2023knowledge,li2022representation} and market data \cite{10729282}.

\section{Background}\label{sec::back}

Now we introduce the important concepts used in this framework and estimators.

\textbf{Spatial Point Patterns.}
The point patterns describe the locations where events occur within a given region and are often characterized using point process models \citep{baddeley2008analysing}. The rectangle layer with colored blocks in Figure \ref{fig1sub1} illustrates an example of a spatial point pattern. We employ spatial point patterns to describe the treatments and outcomes in a time step.

\textbf{Spatial Point Process and Intensity Function.} 
The spatial point process is a statistical model that describes the spatial distribution of the locations of events in a given region.
Formally, let $\mathcal{N}(\cdot)$ denote the counting measure, and $D_s\subset \mathbb{R}^d$ is a measurable subset of $\mathbb{R}^d$.
A spatial point process can be defined as a collection of random variables $\{\mathcal{N}(A)| A \subset D_s \}$, where $\mathcal{N}(A)$ represents the number of events occurring in a specific spatial region $A$.
An important characteristic of the spatial point process is the expected number of events occurring within a region $A$, i.e., $E[\mathcal{N}(A)]$.
It can be calculated by an intensity function, which represents the average density of events occurring within a unit region.
Let $s$ denote a location and $ds$ denote a small region located at location $s$ with area $v(ds)$, the definition of the intensity function $\lambda(\cdot)$ is:
\begin{equation*}
    \lambda(s) = \lim_{v(ds) \to 0} \frac{E[\mathcal{N}(ds)]}{v(ds)}.
\end{equation*}
Then we have $E[\mathcal{N}(A)] = \int_A \lambda(s) ds$, $A \subset D_s$.
Intensity functions are crucial to characterize spatial point processes, which can be utilized to generate spatial point patterns. 

\textbf{Spatial Poisson Point Process.}
An important spatial point process is the spatial Poisson point process, which is also the primary focus of this paper.
We continue to use $\lambda(s)$ to denote the intensity function of a Poisson point process.
An important property of a Poisson point process is $\mathcal{N}(A) \sim \text{Poisson}(\lambda_A)$, and $\lambda_A = \int_A \lambda(s) ds$. $\text{Poisson}(\cdot)$ denotes the Poisson distribution. 
According to the properties of the Poisson distribution, $E[\mathcal{N}(A)]=\lambda_A$. The detailed proof is in Appendix \ref{pointprocess}.

\section{Method}\label{sec:method}
We first introduce the potential outcome framework for spatial-temporal data. Based on this framework, we describe the estimands of interest. Finally, we derive the estimators and implement them with deep learning models. 

\subsection{Potential Outcome Framework for Spatial-Temporal Data} \label{notations}
Before formalizing our problem setting and estimands, we introduce the potential outcome framework for spatial-temporal data \citep{papadogeorgou2022causal}. Specifically, denote $Z_{t}(s)$ as a binary treatment variable, and $Z_{t}(s)=1$, $Z_{t}(s)=0$ represent the presence and absence of treatment at time $t$ and location $s$, respectively.
Let $\gamma=\{1,2,\ldots,T\}$ denote the time index set, and $\Omega$ denote a spatial region.
Then $Z_{t}=\{Z_{t}(s)|s\in\Omega\}$ is the collection of binary treatment variables of all the locations in the region $\Omega$.
$Z_t$ can be regarded as a spatial point pattern generated by a spatial point process, with $Z_t(s)=1$ indicating the presence of a point and $Z_t(s)=0$ indicating its absence.
$Z=\{Z_{t}|t\in\gamma\}$ is the collection of treatments in time index set $\gamma$.
By taking a subset of $Z$, we construct a historical treatment set up to time $t$, i.e., $Z_{\leq t}=\{Z_1,Z_2,\ldots,Z_t\}$.
By assigning $Z_{\leq t}$ as $z_{\leq t}$, we have the historical treatment realization $z_{\leq t}=\{z_1,z_2,\ldots,z_t\}$, where $z_1,z_2,\ldots,z_t$ are the assignments of $Z_1,Z_2,\ldots,Z_t$.
Then, we denote $S_{z_t}=\{s\in\Omega|z_{t}(s)=1\}$ as the set of locations that have received treatment in time $t$.

Similarly, we can formalize the outcome.
Specifically, denote $Y_{t}(Z_{\leq t})(s)$ as a binary outcome variable, and $Y_{t}(Z_{\leq t})(s)=1$, $Y_{t}(Z_{\leq t})(s)=0$ represent the occurrence and non-occurrence of the target event at time $t$ and position $s$ under a treatment $Z_{\leq t}$, respectively.
Let $Y_{t}(Z_{\leq t})=\{Y_{t}(Z_{\leq t})(s)|s\in\Omega\}$ denote the potential outcome in a region $\Omega$ given the historical treatment $Z_{\leq t}$, by assigning $Z_{\leq t}$ as $z_{\leq t}$, and we can distinguish $Y_{t}(Z_{\leq t})$ from the notion of observed outcome $Y_{t}^{ob}(z_{\leq t})=Y_{t}(Z_{\leq t})|_{Z_{\leq t}=z_{\leq t}}$.
$Y_{t}^{ob}(z_{\leq t})$ represents a spatial point pattern generated by a spatial point process, with $Y_{t}(z_{\leq t})(s)=1$ indicating the presence of a point and $Y_{t}(z_{\leq t})(s)=0$ indicating its absence.
Likewise, $Y_{\leq t}$ and $Y_{\leq t}^{ob}$ represent the potential outcome and observed outcome up to time $t$, respectively. 
$S_{Y_t^{ob}(z_{\leq t})}=\{s\in\Omega|Y_t^{ob}(z_{\leq t})(s)=1\}$ is the collection of the locations whose target outcomes occur in time $t$. 

Let $X_{t}$ denote the covariates at time $t$ and $X_{\leq t}=\{X_1,X_2,\ldots,X_t\}$ denote the covariates up to time $t$, and $x_{\leq t}\{x_1,x_2,\ldots,x_t\}$ is its realization.
Finally, by combining the treatment, outcome, and covariates, we obtain the historical information.
Let $h_{\leq t}=\{z_{\leq t},Y_{\leq t}^{ob},x_{\leq t}\}$ be the observed historical information up to time $t$, and $H_{\leq t}=\{Z_{\leq t},Y_{\leq t},X_{\leq t}\}$ be the potential historical information up to time $t$. Clearly, $h_{\leq t}\subset H_{\leq t}$.
We summarize all notations in Appendix \ref{A}.

\subsection{Counterfactual Outcomes}
We aim to estimate the expected number of outcome-occurring locations in an area under a counterfactual treatment assignment mechanism. First, we introduce the method for treatment intervention.

\textbf{Treatment Intervention.}
We consider stochastic treatment intervention, making treatment follow a specific intervention distribution.
Let $F_h(\cdot)$ denote a spatial point pattern distribution drawn from a spatial Poisson point process with intensity function $h$. Then $F_h(z_t)$ represents we assign treatment realization $z_t$ following the distribution $F_h(\cdot)$.  
For a historical treatment realization $z_{\leq t}=\{z_1,z_2,\ldots,z_t\}$, we can apply $F_h(\cdot)$ to the last element of $z_{\leq t}$ (the latest treatment), i.e., $z_{\leq t}(F_h):=\{z_1,z_2,\ldots,F_h(z_t)\}$, to intervene the treatment.
Similarly, let $F_H(\cdot)=F_{h_1}(\cdot)\times F_{h_2}(\cdot) \times\ldots \times F_{h_M}(\cdot)$ denote a joint distribution with $M$ independent ones, and $z_{\left[ \text{$t-M+1,t$} \right]}=(z_{t-M+1},z_{t-M+2},\ldots,z_{t})$ denote the last $M$ elements of $z_{\leq t}$.
We can apply $F_H(\cdot)$ to $z_{\left[ \text{$t-M+1,t$} \right]}$ to obtain $z_{\leq t}(F_H):=\{z_1,\ldots,z_{t-M},F_{h_1}(z_{t-M+1}),\ldots,F_{h_ M}(z_t)\}$.
It is noted that \textbf{the proposed treatment intervention method is an improvement of the method in} \cite{papadogeorgou2022causal}. Specifically, \citet{papadogeorgou2022causal} assume the intervention distributions within $F_H$ are all the same ($F_h$), while we allow the distributions in $F_H$ to be diverse and change with time (i.e., $F_{h_t}$). Recall that $F_H$ denotes the intervened distributions of previous treatments. \textbf{Therefore, the estimands in our setting are more general than those in} \cite{papadogeorgou2022causal}. 

\paragraph{Estimands.} \label{estimads}
Based on the treatment intervention method, we define the expected number of outcomes in time $t$ and a region $\omega$ as the estimand of interest $N^\omega_t(F_H)$.
Under the intervention distribution $F_H(\cdot)$, we have
\begin{equation}\label{eq2}
    N^\omega_t(F_H) =  \int_{Z^M} |S_{Y_t^{ob}(z_{\leq t}(F_H))} \cap \omega|
    dF_H(z_{\left[ \text{$t-M+1,t$} \right]}).
\end{equation}
Furthermore, taking the mean values of $N^\omega_t(F_H)$ over time $t = M, M+1, \ldots, T$, we obtain,
\begin{equation}\label{eq4}
    N_\omega(F_H) = \frac{1}{T-M+1} \sum_{t=M}^{T} N^\omega_t(F_H).
\end{equation}
It is noted that \textbf{the treatments of $M$ periods follow the intervention distribution $F_H$ we specified, which does not necessarily exist in the observable data. Thus, the proposed estimands are counterfactual.}
To better demonstrate the estimands, we provide a running example that walks through each term of the estimands in Appendix \ref{running example}.

\subsection{Assumptions} \label{assum}
To identify the above estimands, we introduce the necessary mild assumptions.

\textbf{Assumption 1. (Unconfoundedness)}
We assume that condition on $h_{\leq t}$, $Z_t$ is not dependent on $H_{\leq t}$: $Z_t \upmodels H_{\leq t} | h_{\leq t}$.
This assumption is also known as ignorability \citep{rubin1978bayesian}, which is commonly used in causal inference, indicating the absence of unmeasured covariates.

\textbf{Assumption 2. (Poisson Assumption)}
We assume that $Z_t$, $Y_t(Z_{\leq t})$ and $Z_t|h_{\leq t-1}$ are generated from spatial Poisson point processes.
This assumption is reasonable since the spatial Poisson point processes are widely used to characterize the distribution of discrete events in a region \citep{cressie2015statistics}. 

Additional assumptions and their explanations can be found in Appendix \ref{proofs}. We provide the proof of identifiability for the estimands in Appendix \ref{identify}.

% 想想怎么把可识别性证明插进去

\subsection{Estimators} 
With the above assumptions, we derive estimators for the corresponding estimands. The proposed estimators are based on inverse probability weighting (IPW). As a key component of IPW, we first introduce the propensity score within the framework used in this work.

\textbf{Propensity Score.}
The propensity score within this framework is the probability of treatment $Z_t$ given historical information $h_{\leq t-1}$, denoted as $e_t(z_t) = \mathbb{P}(Z_t=z_t|h_{\leq t-1})$.

\textbf{Counterfactual Probability.}
We define the counterfactual probability of treatment $Z_t$ as $p_h(z_t)$, i.e., $p_h(z_t) = f_h(Z_t=z_t)$. $f_h(\cdot)$ denotes the probability density function of the intervention distribution $F_h(\cdot)$.
A detailed explanation of why $p_h(z_t)$ is ``counterfactual" is in Appendix \ref{counter explanation}.

\textbf{Intensity Function.} \label{sptial}
In section \ref{notations}, we have defined $Y_t^{ob}(z_{\leq t})=\{Y_t^{ob}(z_{\leq t})(s)|s \in \Omega\}$.
$Y_t^{ob}(z_{\leq t})$ is a discrete spatial point pattern generated by a spatial Poisson point process.
We denote the intensity function of this latent spatial Poisson point process as $\lambda_{Y_t^{ob}(z_{\leq t})}(s)$, this intensity function is assumed to be continuous, serving as a spatial smoothing of $Y_t^{ob}(z_{\leq t})$.

\textbf{Inverse Probability Weighting Estimator.}
With the above components, the inverse probability weighting estimator $\hat{Y}_t(F_H, s)$ is derived as follows:
\begin{equation}\label{eq8}
    \hat{Y}_t(F_H, s) = \prod_{j=t-M+1}^{t} \frac{p_{h_j}(z_j)}{e_j(z_j)} \lambda_{Y_t^{ob}(z_{\leq t})}(s).
\end{equation}

$\hat{Y}_t(F_H, s)$ can be viewed as the intensity function of a spatial Poisson point process that generates $Y_t^{ob}(z_{\leq t}(F_H))$.
According to the definition of intensity function in Section \ref{sec::back}, the expected number of outcome-occur points in a region $\omega$ is obtained by integrating $\hat{Y}_t(F_H, s)$ over region $\omega$.
Therefore, the estimator of $N^\omega_t(F_H)$ is shown as follows:
\begin{equation}\label{eq9}
    \hat{N}^\omega_t(F_H) = \int_\omega \hat{Y}_t(F_H, s) ds.
\end{equation}
Then we obtain the estimator of ${N}_\omega(F_H)$, $\hat{N}_\omega(F_H) = \frac{1}{T-M+1} \sum_{t=M}^{T} \hat{N}^\omega_t(F_H)$. 

We briefly outline the derivation of our estimator. It is built upon the Inverse Probability Weighting (IPW) approach from Marginal Structural Models (MSMs) \citep{robins2000marginal}. In particular, the term $\lambda_{Y_t^{ob}(z_{\leq t})}(s)$ in Eq.~\eqref{eq8} denotes the intensity function of the observed outcome at time $t$ and location $s$, while the product $\prod_{j=t-M+1}^{t} \frac{p_{h_j}(z_j)}{e_j(z_j)}$ serves as a weighting factor analogous to that used in MSMs. The method extends the IPW-based weighting mechanism from MSMs to a spatial-temporal framework.

\subsection{Theoretical Properties} \label{theory}
Now we introduce the important theoretical properties of the propensity score and estimator.

\textbf{Proposition 1.}
The propensity score is a balancing score. For any $t \in \gamma$,
\begin{equation*}
    Z_t \upmodels h_{\leq t} | e_t(z_t).
\end{equation*}

\textbf{Proposition 2.}
Dimensional reduction property of the propensity score: if $Z_t \upmodels H_{\leq t} | h_{\leq t}$  then $ Z_t \upmodels H_{\leq t} | e_t(z_t)$.

\textbf{Proposition 3. (The consistency and asymptotic normality of the estimator)}
Let $Var$ denote the Variance, $N_\omega(Y_t)$ denote the number of the observed outcome $Y_t$ in region $\omega$. If all assumptions hold, and $T \rightarrow \infty$, we have that
\begin{equation*}
    \sqrt{T}(\hat{N}_\omega(F_H)-N_\omega(F_H)) \xrightarrow{d} N(0,v),
\end{equation*}
where 
$v = lim_{T \rightarrow \infty} \frac{1}{T-M+1} \sum^T_{t=M} v_t$
and
    $v_t = Var[\prod_{j=t-M+1}^{t} \frac{p_{h_j}(z_j)}{e_j(z_j)}N_\omega(Y_t)| H_{\leq t-M}].$
    
The proofs of the above propositions are in Appendix \ref{proofs}. 
% In addition, we discuss the sensitivity analysis for the estimators in Appendix \ref{sensitive}.

\subsection{Deep-Learning-Based Realization}
This subsection introduces how we employ neural networks to realize the above estimators.
Figure \ref{fig2} shows the full architecture of our model. 

\subsubsection{Calculate the Propensity Score} \label{ps}
\textbf{Challenges.}
We first discuss the challenges that high-dimensional data brings and why the classical classification-based method cannot address this problem. 
In the classical setting, treatments take limited values (e.g., binary treatments). Thus, we can utilize covariates to train a classifier of treatments and employ this classifier to predict the propensity scores.
However, in our setting, the treatment $Z_t$ is high-dimensional. Specifically, $Z_{t}=\{Z_{t}(s)|s\in\Omega\}$ contains all the binary treatment variables of locations in region $\Omega$. If there are 100 locations $s$, the $Z_t$ can take $2^{100}$ values.
Then we need to train a classifier with $2^{100}$ classes, which is unacceptable.
To address the challenges, we propose the following method.

\begin{figure*}[h]
    \centering
    \includegraphics[width=0.9\linewidth]{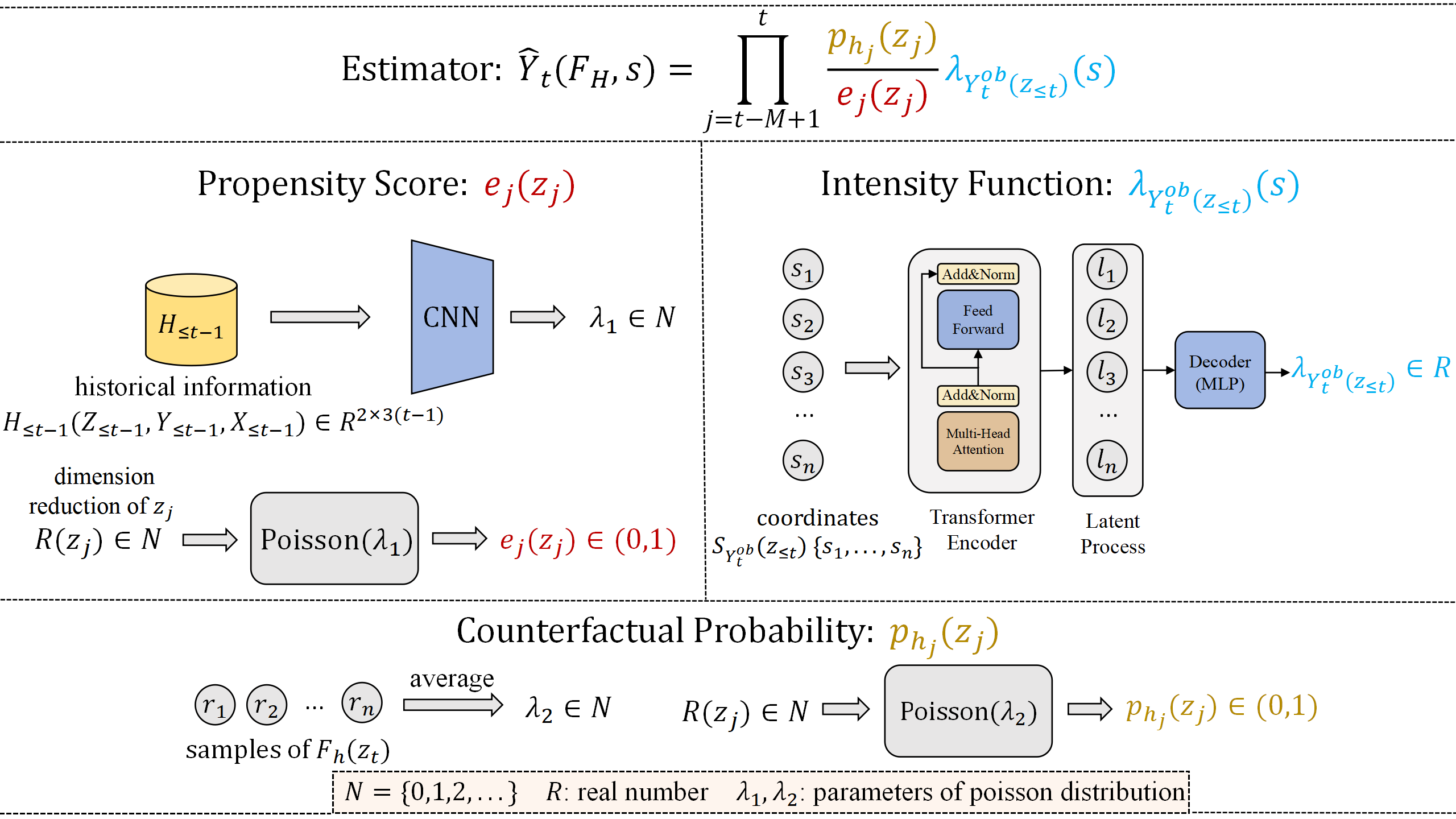}
    \caption{Full model architecture.} 
    \label{fig2}
\end{figure*}

\textbf{Dimension Reduction.}
Since it's hard to compute the propensity scores of high-dimensional treatments directly, we utilize a dimension reduction map to project treatments into a low-dimensional space. 
Let $R(\cdot)$ denote the dimension reduction map. We define it as follows:
\begin{equation}\label{eq12}
R(Z_t) = |\{Z_t(s); Z_t(s) = 1, s \in \Omega\}|.
\end{equation}
$R(Z_t)$ maps $Z_t$ to a low-dimensional space, representing the count of treated locations in time $t$.
We focus on the count of treated points when constructing estimators.
Therefore, this dimension reduction map preserves the information in the treatments while facilitating efficient computation of propensity scores.

After the dimension reduction of treatment, we express the propensity score in the following form:
\begin{equation}
    e_t(R(z_t)) = \mathbb{P}(R(Z_t)=R(z_t)|h_{\leq t-1}).
\end{equation}
Then we specify the distribution $\mathbb{P}(R(Z_t)=R(z_t)|h_{\leq t-1})$ to calculate the propensity score.
According to Assumption 2, $Z_t|h_{\leq t-1}$ is generated by a spatial Poisson point process. 
With the properties of spatial Poisson point process in section \ref{sec::back}, we have $R(Z_t)|h_{\leq t-1} \sim \text{Poisson}(\lambda_1).$
Thus, the task of calculating the propensity score is transformed into estimating the parameter $\lambda_1$ of the Poisson distribution $\text{Poisson}(\lambda_1)$.

\textbf{CNN-based Regression.}
We formulate the estimation of $\lambda_1$ into a regression problem. Specifically, we employ the high-dimensional series $h_{\leq t-1}$ to regress scalar $R(Z_t)$. According to the definition of regression, such a regression model outputs $E(R(Z_t)|h_{\leq t-1})$\footnote{Regression models can be seen as modeling the probability distribution of the dependent variable, with the predicted output corresponding to conditional expectation.}. 
\textbf{Since CNNs efficiently extract local features from high-dimensional tensors. They capture spatial hierarchies and patterns, making them ideal for data like 3D structures.}
We employ CNNs as the regression model and use the MSE loss during training. Specifically, denote the output of CNNs as $output(\cdot) $, then MSE = $E{(output(h_{\leq t-1}) - R(z_t))^2}$. 
A detailed structure of the used CNNs is in Appendix \ref{detailcnn}.

Consequently, according to the properties of the Poisson distribution, $E(R(Z_t)|h_{\leq t-1})$ is synonymous with the parameter $\lambda_1$ that governs $R(Z_t)|h_{\leq t-1}$. Detailed derivation can be found in Appendix \ref{pointprocess}. Utilizing the Poisson distribution, the propensity score can be calculated by the following equation:
\begin{equation}\label{eq14}
e_t(R(z_t)) = \frac{\lambda_1^{R(z_t)}}{R(z_t)!}e^{-\lambda_1}.
\end{equation}

\subsubsection{Calculate Counterfactual Probability}
Analyzing the probability distribution of high-dimensional variables directly is challenging due to the ``curse of dimensionality", which leads to data sparsity, increased complexity of relationships, and higher computational costs \citep{verleysen2005curse}.
Thus, we utilize the dimension reduction map to project $Z_t$ into a low-dimensional space. Consequently, the transformed counterfactual probability takes the form: 
\begin{equation}
  p_h(R(z_t))=f_h(R(Z_t)=R(z_t)).
\end{equation}
With the properties of spatial Poisson point processes in section \ref{sec::back}, $R(Z_t)$ follows a Poisson distribution given by: $R(Z_t) \sim \text{Poisson}(\lambda_2)$.
Then, the objective is transformed into estimating the parameter $\lambda_2$.

As shown in Figure \ref{fig2}, to achieve this goal, we draw samples from the Poisson point process $F_h(z_t)$ and estimate $E(R(Z_t))$ using the sample mean of $R(Z_t)$.
With the property of Poisson distribution, $E(R(Z_t))$ is equivalent to $\lambda_2$, the detailed derivation is in Appendix \ref{pointprocess}.
Leveraging the definition of Poisson distribution, the counterfactual probability is computed by the following equation:
\begin{equation}\label{lambda2}
p_h(R(z_t)) = \frac{\lambda_2^{R(z_t)}}{R(z_t)!}e^{-\lambda_2}.
\end{equation}

\subsubsection{Transformer-Based Intensity Function Estimation} \label{sptialnet}

As for the estimation of the intensity function, we draw inspiration from the maximum likelihood estimation (MLE). The core idea is to use the neural network to model $\lambda_{Y_t^{ob}(z_{\leq t})}(s)$, and train this network to maximize the likelihood function. 
We define $net: \Omega \xrightarrow{} R$ as the output of this neural network.
The training objective is to minimize the following function:
\begin{equation}\label{eq15}
-\sum^{|S_{Y_{t}^{ob}(z_{\leq t})}|}_{i=1} \ln(net(s_i)) + \int_\Omega net(s) ds - \text{KL}(q||p),
\end{equation}
where $s_i \in S_{Y_{t}^{ob}(z_{\leq t})}$ is the coordinate in $S_{Y_{t}^{ob}(z_{\leq t})}$, $q$ is the data distribution obtained from the Transformer encoder and Gaussian sampling, and $p$ is a prior standard Gaussian distribution. $\text{KL}(q||p)$ is the Kullback–Leibler divergence between two distributions $q$ and $p$. Note that $\ln(net(s_i))$ and $\int_\Omega net(s)$ are the components of the likelihood function of the spatial Poisson point process. A detailed derivation of the training objective is in Appendix \ref{objective}. An overall architecture of the network is depicted in Figure \ref{fig2}.

\textbf{Why choose the Transformer?}
During training, we input coordinates sequence in $S_{Y_{t}^{ob}(z_{\leq t})}$ into the network.
Since the Transformer can capture long-term and high-order dependencies and meanwhile enjoy computational efficiency \citep{zuo2020transformer}, the ability to capture such dependencies creates more powerful models than RNNs, which facilitates the estimation of intensity function \citep{zhou2022neural,zuo2020transformer}. Thus, we employ the Transformer to realize this network. In Section \ref{abla}, through experiments, we further demonstrate the superiority of the Transformer model.

\section{Experiments and Results}
We evaluate our method on both synthetic and real datasets.   

\subsection{Datasets}
\subsubsection{Synthetic Data} \label{synthetic}
We employ the intensity functions of spatial Poisson point processes to generate synthetic spatial-temporal data. Generation details are in Appendix \ref{intensity functions}. 

\subsubsection{Real World Data}
\textbf{Forest Change Data.}
The Global Forest Change Data is an annually updated global dataset capturing forest loss derived from time-series images acquired by the Landsat satellite \citep{hansen2013high}.  
We consider forest loss events in Colombia from 2002 to 2022 and treat them as outcomes. 

\textbf{UCDP Georeferenced Event Dataset.}
The UCDP Georeferenced Event Dataset is a comprehensive dataset documenting global conflicts, providing details on parameters such as the temporal aspects and geographical coordinates of conflict events \citep{croicu2015ucdp}. 
We exclusively examine conflict events transpiring within the territorial boundaries of Colombia and consider them as treatments. The chronological span of our analysis encompasses the years from 2002 to 2022.

A brief overview and detailed description of the real-world data are in Appendix \ref{realdetail}. 

\subsection{Baselines, Metrics, and Implementations}

\textbf{Baselines.}
The chosen baselines are state-of-the-art literature on the counterfactual outcomes estimation \citep{lim2018forecasting, robins2000marginal}, and the heterogeneous treatment
effects estimation \citep{wager2018estimation}. These are: MSMs \citep{robins2000marginal},  RMSNs \citep{lim2018forecasting}, Causal Forest \citep{wager2018estimation}. We also employ a linear regression-based method (LR) as an additional baseline for comparison. Since these baselines cannot handle the high-dimensional series directly, we transformed all our synthetic data into scalar series to adapt the baselines to our setting. 
Appendix \ref{baselines} details the baselines and adaptation.

\textbf{Metrics.}
For synthetic experiments, we compute the true counterfactual outcomes as ground truth and employ the relative error rate (RER) between the estimation and the ground truth as a metric.
Details of the ground truth are in Appendix \ref{ground truth}.
The formula of RER is shown below:
\begin{equation*}
    \text{RER} = \frac{|\text{Estimated Value - True Value}|}{|\text{True Value}|},
\end{equation*}
where $|\cdot|$ denotes the absolute value.

For real data experiments, since the true counterfactual outcomes of real data are unknown, we consider the consistency of our conclusions with existing literature on the effects of human conflicts on natural resources. 

\textbf{Error Bars.} 
For each experimental setup, we conduct 20 independent runs. Reported values are the means, and error bars indicate the standard deviation ($\pm\sigma$) over the 20 runs.

\textbf{Implementations.} \label{implementations}
For CNNs in section \ref{ps}, the details are
$\text{epoch = 200}$, $\text{learning rate = 0.001}$, $\text{batch size = 64}$.
For the Transformer-based neural network in Figure \ref{fig2}, the details are as follows:
The number of blocks in the Transformer encoder is 8, the number of attention heads per layer is 16, the number of layers in the Multi-Layer Perceptron decoder (MLP) is 8, epoch = 300, learning rate = 0.0001.

\subsection{Synthetic Experiments} 

\begin{figure*}[t]
    \centering
    \includegraphics[width=0.95\linewidth]{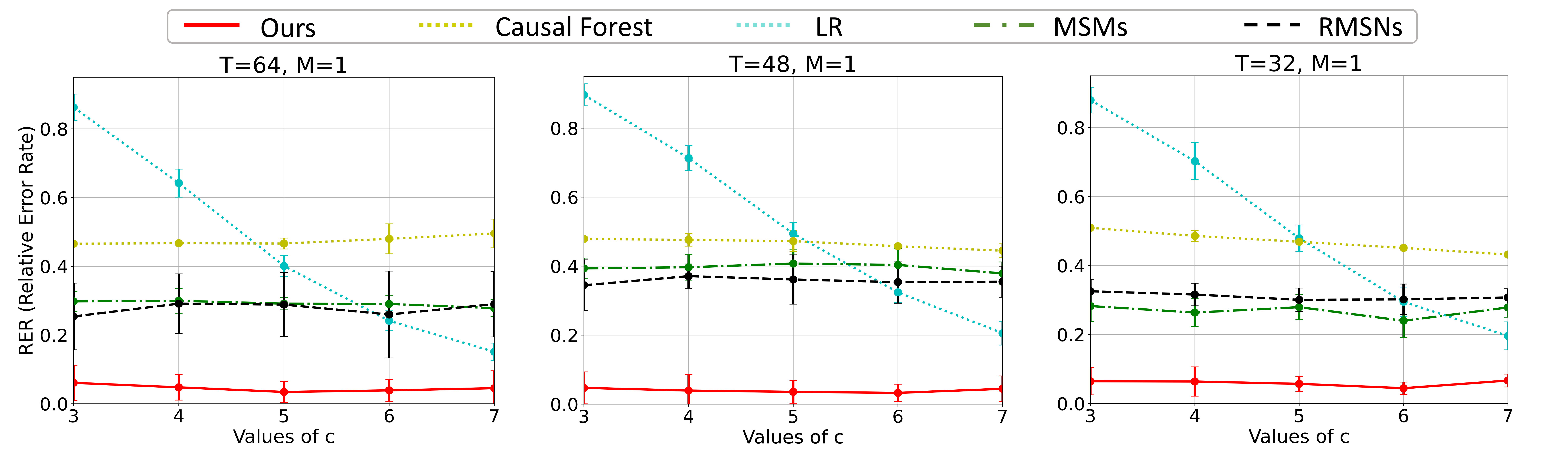}
    \caption{Synthetic experiments results of $M=1$. The horizontal axis represents the values of $c$, while the vertical axis represents the relative error rate (RER). Lower lines in the graph correspond to methods with higher estimation accuracy.}
    \label{fig3}
\end{figure*}

\textbf{Treatment Intervention.}
In Section \ref{estimads}, we introduced the crucial estimands $N_t^\omega(F_H)$, which need to determine the intervention distribution of treatments over the $M$ periods from $t-M+1$ to $t$.
Let $\lambda_{Z_j}$ denote the intensity function of the spatial point process generating $Z_j$.
We replace $\lambda_{Z_j}$ with the form: $h_j = c\times \log{(\lambda_{Z_{j}})}$, $c$ is a constant, $j\in \{t-M+1,t-M+2,\ldots,t\}$, and $\log{(\cdot)}$ represents the natural logarithm.
\textbf{The constant $c$ is introduced to control the magnitude of the intensity function.} A larger $c$ corresponds to the larger values of the intensity function, indicating an increased average density of treatments.
Under our spatial-temporal setting, \textbf{parameter $M$ is crucial for regulating the duration of the intervention. A larger $M$ signifies an extended period of intervening in the treatments.}
In synthetic experiments, $c\in \{3,4,...,7\}$ and $M\in \{1,3\}$. 
Details of treatment intervention are in Appendix \ref{M explanation}.

\begin{table}[t]
\centering
\footnotesize
\caption{Real data experiments results. The data in the table are the estimated numbers of forest loss events.}
\setlength{\tabcolsep}{2pt}
\renewcommand{\arraystretch}{1.3}
\begin{tabular}{cccccc}
\toprule
   & $c=3$ & $c=4$ & $c=5$ & $c=6$ & $c=7$ \\
\midrule
$M=1$  & 20.6 $\pm$ 2.3  & 20.5 $\pm$ 2.2 & 20.7 $\pm$ 2.8 & 20.5 $\pm$ 1.9 & 20.8 $\pm$ 2.0 \\
$M=3$  & 21.5 $\pm$ 1.4   & 21.6 $\pm$ 2.4 & 22.3 $\pm$ 1.9 & 23.0 $\pm$ 1.3         & 23.3 $\pm$ 1.9  \\
$M=5$  & 22.4 $\pm$ 2.3  & 22.9 $\pm$ 1.8  & 23.6 $\pm$ 1.7  & 24.2 $\pm$ 1.2 &  24.7 $\pm$ 2.2      \\
$M=7$  & 24.7 $\pm$  1.3  &  23.6  $\pm$ 1.7    & 26.7  $\pm$ 1.2 & 27.2 $\pm$ 1.5 & 28.0  $\pm$ 2.1 \\
\bottomrule
\end{tabular}
\vspace{-10pt}
\label{realtable}
\end{table}

\textbf{Results.}
We generate three synthetic datasets with time lengths of 32, 48, and 64 (i.e., $T=32, 48, 64.$).
We conduct experiments on each dataset, and in the main text, we present the experimental results of $M=1$.
Additional experiment results are in Appendix \ref{simulation results}.
For $M=1$, we calculate different $c\in \{3,4,\ldots,7\}$, gradually increasing the intensity of the treatment.
The results are presented in Figure \ref{fig3}.
% Table \ref{syntable} presents the experiment results of $T=64$ and $M=1$. 
% We have highlighted the superior performance in the tables. Across different values of $c$, our method consistently outperforms the baselines.
Figure \ref{fig3} visually illustrates our results. Across different time lengths, our method consistently outperforms the baselines.  

\subsection{Real Data Experiments}
In real data experiments, we consider conflicts in Colombia as the treatment variable $Z$ and forest loss in Colombia as the outcome variable $Y$. 
We set $M \in \{1,3,5,7\}$, and $c \in \{3,4,...,7\}$, representing a gradual increase in the intensity and duration of conflicts.

\textbf{Results.}
The results are illustrated in Table \ref{realtable}. 
Table \ref{realtable} illustrates that the estimated forest loss increases as $M$ and $c$ increase. 
Thus, our results suggest that longer and more intense conflicts in Colombia may increase forest loss. Previous studies \citet{de2007extreme, eniang2007impacts, garzon2020environmental, kanyamibwa1998impact} conclude that conflicts may harm natural resources like forests, which is consistent with our conclusion.

\subsection{Ablation Studies} \label{abla}
\textbf{Replace Transformer with RNNs.}
In Section \ref{sptialnet}, we discuss the reason for choosing the Transformer as the backbone to estimate the intensity function. To further demonstrate the Transformer's superiority over RNNs, we replace the Transformer with RNNs and replicate the synthetic experiments. 
We present the results of T=64, M=1 in Figure \ref{rnn trans 64t m1}, other results and details of RNNs are in Appendix \ref{rnn detail}.
According to Figure \ref{rnn trans 64t m1}, results show that the estimation ability of RNNs is inferior to that of the Transformer. 

\begin{figure}[t]
    \centering
    \includegraphics[width=0.8\linewidth]{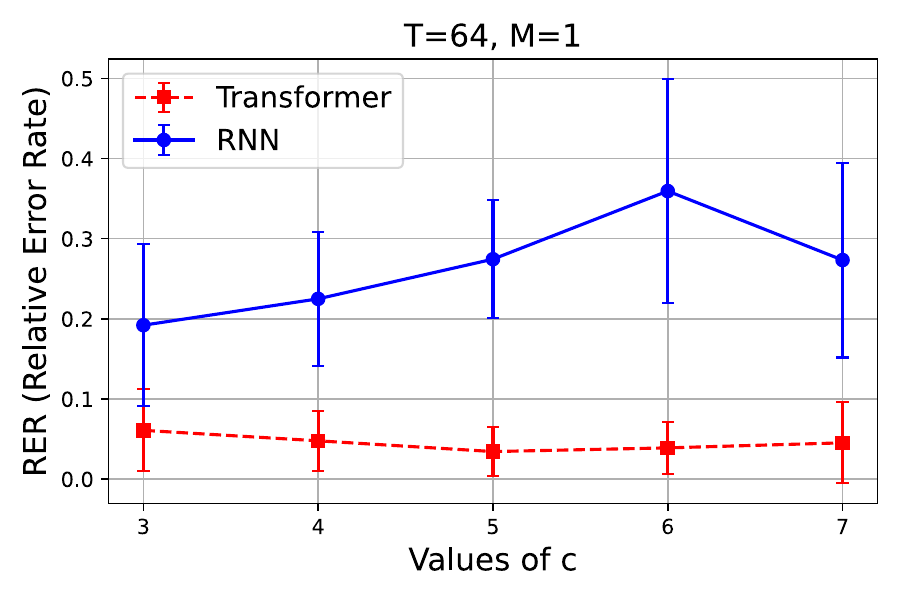}
    \caption{Comparison results of Transformer and RNN.}
    \label{rnn trans 64t m1}
    \vspace{-14pt}
\end{figure}

\textbf{Relax the Poisson Assumption.}
To validate the robustness of our methods over Assumption 2 (Poisson Assumption), we relax it and replicate the synthetic experiments. Specifically, we add the Gaussian kernel to the intensity functions of synthetic data, thus breaking the standard setting of the Poisson point process. We compare the results with the standard Poisson setting. We present the results of setting T=64, M=1 in Figure \ref{g t64 m1}. The implementation details and other results are in Appendix \ref{gaussian detail}. According to Figure \ref{g t64 m1}, relaxing the Poisson assumption does not lead to significant degradation in the estimation performance of our method, demonstrating its robustness to data distribution.

\begin{figure}[t]
    \centering
    \includegraphics[width=0.8\linewidth]{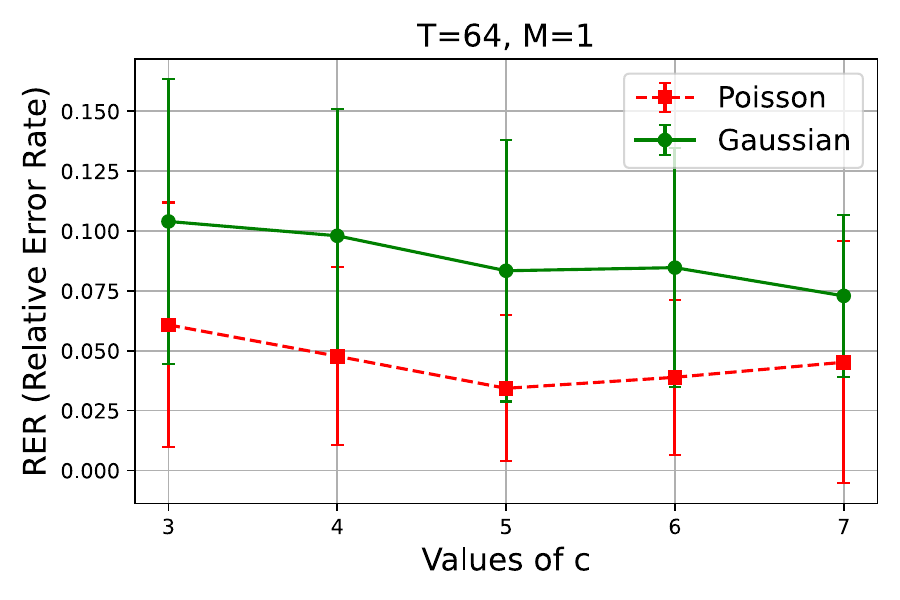}
    \caption{Results of relaxing the Poisson assumption.}
    \label{g t64 m1}
    \vspace{-10pt}
\end{figure}

\subsection{Computation Efficiency} \label{compute}
Now we introduce the computing platform and computation efficiency.
All experiments were conducted on an NVIDIA RTX 4090 GPU and an Intel Core i7 14700KF processor.
In synthetic experiments, the synthetic data dimension could be (100,100,192). In addition, the time for one experimental setting was less than 10 minutes.

\section{Conclusion}
In this work, we study the estimation of counterfactual outcomes for spatial-temporal data and develop estimators with deep learning. 
Our experiments on synthetic datasets demonstrate the superior estimation capability of our estimator over four baselines, and experiments on real-world datasets provide a valuable conclusion to the causal effect of conflicts on forest loss in Colombia.
For future work, we look to consider more general types of spatial-temporal data, moving beyond discrete point process data to encompass more general spatial-temporal stochastic processes.

\section*{Acknowledgements}
This work was partially supported by the National Natural Science Foundation of China (No. 62372459). 
% We express our thanks to Mingyu Liu for his incredible coding abilities. 

\section*{Impact Statement}
In this work, we propose a framework to estimate the counterfactual outcomes with spatial-temporal attributes, which has positive societal impacts of helping predict the counterfactual outcomes of spatial-temporal data. 
Besides, all data used in this work are publicly available, and our newly released assets are the source code of our paper, which does not contain unsafe images or text.

% In the unusual situation where you want a paper to appear in the
% references without citing it in the main text, use \nocite
% \nocite{langley00}

\bibliography{example_paper}
\bibliographystyle{icml2025}

%%%%%%%%%%%%%%%%%%%%%%%%%%%%%%%%%%%%%%%%%%%%%%%%%%%%%%%%%%%%%%%%%%%%%%%%%%%%%%%
%%%%%%%%%%%%%%%%%%%%%%%%%%%%%%%%%%%%%%%%%%%%%%%%%%%%%%%%%%%%%%%%%%%%%%%%%%%%%%%
% APPENDIX
%%%%%%%%%%%%%%%%%%%%%%%%%%%%%%%%%%%%%%%%%%%%%%%%%%%%%%%%%%%%%%%%%%%%%%%%%%%%%%%
%%%%%%%%%%%%%%%%%%%%%%%%%%%%%%%%%%%%%%%%%%%%%%%%%%%%%%%%%%%%%%%%%%%%%%%%%%%%%%%
\newpage
\appendix
\onecolumn
\icmltitle{Supplementary Materials of Transformer-Based Spatial-Temporal Counterfactual Outcomes Estimation}

\section{Notations} \label{A}
\subsection{Causal Model}
\begin{table}[h]
\centering
\caption{Notations for the causal model}
\begin{tabularx}{1\textwidth}{XX}
\toprule
 $\Omega$ & Whole spatial region\\
 $\omega$ & One specific spatial region \\
 $\gamma=\{1,2,...,T\}$ & Time index set\\
 $s$ & Spatial location \\ 
 $t$ & Time \\
 $T$ & Number of time periods \\
 $Z_t$ & Treatment in time $t$ \\
 $Z_{\leq t}$ & Treatment up to time $t$ \\
 $Z_{[t-M+1,t]}$ & Treatment between time $t-M+1$ and time $t$ \\
 $S_{z_t}$ & Treatment location in time $t$ \\
 $Y_t(Z_{\leq t})$ & Potential outcome in time $t$ \\
 $Y_t^{ob}(z_{\leq t})$ & Observed outcome in time $t$ \\
 $S_{Y_t^{ob}(z_{\leq t})}$ & Outcome location in time $t$\\
 $X_t$ & Covariates in time $t$ \\
 $x_t$ & Covariates realization in time $t$ \\
 $h_{\leq t}$ & Observed historical information up to time $t$ \\
 $H_{\leq t}$ & Potential historical information up to time $t$ \\
\bottomrule
\end{tabularx}
\end{table}

\subsection{Intervention}
\begin{table}[h]
\centering
\caption{Notation for treatment intervention}
\begin{tabularx}{1\textwidth}{XX}
\toprule
$h$ & Intensity function of spatial point process\\
$M$ & Duration of treatment intervention\\
$F_h$ & Distribution of spatial point patterns with intensity $h$\\
$F_H$ & Joint distribution of $M$ independent $F_h$\\
$F_h(z_t)$ & Distribution of $z_t$ is be assigned to $F_h$\\
$z_{\leq t}(F_H)$ & Distribution of $z_{[t-M+1,t]}$ in $z_{\leq t}$ is be assigned to $F_H$\\
\bottomrule
\end{tabularx}
\end{table}

\subsection{Estimands}
\begin{table}[h]
\centering
\caption{Notation for interest estimands}
\begin{tabularx}{1\textwidth}{XX}
\toprule
$N_t^\omega(F_H)$ & Expected number of outcomes that occur in time $t$ and region $\omega$ under intervention distribution $F_H$\\
$N_\omega(F_H)$ & Average $N_t^\omega(F_H)$ over time $M$ to time $T$\\
\bottomrule
\end{tabularx}
\end{table}

\subsection{Estimators}
\begin{table}[H]
\centering
\caption{Notation for estimators}
\begin{tabularx}{1\textwidth}{XX}
\toprule
$\hat{Y}_t(F_H, s)$ & Estimated intensity function of $y_t(z_{\leq t}(F_H))$\\
$\hat{N}_t^\omega(F_H)$ & Estimator for $N_t^\omega(F_H)$\\
$\hat{N}_\omega(F_H)$ & Estimator for $N_\omega(F_H)$ \\
$\lambda_{Y_t^{ob}(z_{\leq t})}(s)$ & Intensity function of spatial Poisson point process that generates $Y_t^{ob}(z_{\leq t}(F_H))$ \\
\bottomrule
\end{tabularx}
\end{table}

\section{Running Example of the Estimands} \label{running example}
Now we provide a running example for the estimands.
For simplicity, consider the case of $t=8$ and $M=3$. Then the estimands 
\begin{equation*}
N^\omega_8(F_H) =  \int_{Z^3} |S_{Y_8^{ob}(z_{\leq 8}(F_H))} \cap \omega| dF_H(z_{\left[ \text{6,8} \right]}),
\end{equation*}
represent the expected number of outcomes in $t=8$ and region $\omega$ under distribution $F_H$. Next, we employ the Table \ref{tab:term_interpretation} to elaborate on each term of the estimands.

\begin{table}[h]
\centering
\renewcommand{\arraystretch}{1.25}
\caption{Interpretation of key terms in the counterfactual estimands.}
\begin{tabular}{@{} l p{12cm} @{}}
\toprule
\textbf{Term} & \textbf{Interpretation} \\
\midrule
$t=8$, $M=3$ & Evaluates outcomes at time 8, considering 3-times intervention persistence (times 6--8). \\
$z_{[6,8]}$ & Sequence of treatment variables over the time window [6,8]. \\
$F_H(z_{[6,8]})$ & Joint probability distribution of $z_{[6,8]}$ under counterfactual intervention strategy $F_H$. \\
$\left|S_{Y_8^{ob}(z_{\leq 8}(F_H))} \cap \omega\right|$ & Observed outcome counts in region $\omega$ at time 8, under $F_H$. \\
$Z^3$ & All possible values of $z_{[6,8]}$. \\
$\int_{Z^3}\cdot \, dF_H(z_{[6,8]})$ & Expectation computation over all possible values of $z_{[6,8]}$. \\
\bottomrule
\end{tabular}
\label{tab:term_interpretation}
\end{table}

\section{Proofs} \label{proofs}
\subsection{Assumptions}
\paragraph{Assumption 1: Unconfoundedness.}
\citet{papadogeorgou2022causal} proposed the unconfoundedness assumption for spatial-temporal data, in which they assume that given $h_{\leq t}$, $Z_t$ is not dependent on any past or future potential outcomes and covariates. Their assumption is quite strict, as the future is unlikely to affect the past, so we relax their assumption and assume that condition on $h_{\leq t}$, $Z_t$ is not dependent on $H_{\leq t}$:
\begin{equation*}
    Z_t \upmodels H_{\leq t} | h_{\leq t}.
\end{equation*}

\paragraph{Assumption 2: Overlap.}
For any $z_t \in Z_t$, $t \in \gamma$ and $h$, there exists a unique constant $\delta_z > 0$, such that $\frac{e_t(z_t)}{p_h(z_t)} > \delta_z$.
This assumption ensures that all treatment point patterns in any intervention distribution $F_h$ are also possible in the observed world.
\citet{papadogeorgou2022causal} only use one treatment distribution $F_h$ in their estimators. In contrast, we used different treatment distributions in our estimators, so here we strictly assume a unique constant $\delta_z$ exists for any treatment distributions.

\paragraph{Assumption 3.}
Let $|\cdot|$ denote the number of elements of a set, and $Var$ denote the variance.

$(a)$ There exists a constant $\delta_Y>0$ such that $|S_{Y_t^{ob}}(z_{\leq t})|<\delta_Y$ for all $t\in\gamma$ and $z_{\leq t}$.

$(b)$ Let $v_t = Var[\prod_{j=t-M+1}^{t} \frac{p_{h_j}(z_j)}{e_j(z_j)} N_\omega(Y_t)| H_{\leq t-M}]$ there exists a constant $v>0$ such that $\frac{1}{T-M+1} \sum^T_{t=M} v_t \xrightarrow{p} v$ as $T \xrightarrow{} \infty$.

$(c)$ $|\int_\omega \lambda_{S_{Y_{t}^{ob}(z_{\leq t})}}(s) ds - N_\omega(Y_t)| < \beta$ and $\beta$ is an infinitesimals. 

$(d)$ $\beta$ is $\mathrm{o}(\frac{1}{\sqrt{T}})$.

In Assumption 3, $(a)$ means that the number of outcome events at any time has an upper bound under any treatment.
In our real-world scenario, we consider forest loss as the outcome event. Forest loss cannot be infinite, so $(a)$ is reasonable.
$(b)$ assumes the convergence in probability of a sequence.
In $(c)$, $N_\omega(Y_t)$ is the actual number of outcome events in time $t$ and region $\omega$ because our neural network smoothing can be viewed as an estimate of the intensity function of a point process. We design it so that the output values of the neural network are as large as possible where the outcome occurs, and as small as possible where the outcome does not occur, so $(c)$ is reasonable.
In $(d)$, we assume that $\beta$ tends to zero at a faster rate than $\frac{1}{\sqrt{T}}$, i.e., $\beta\sqrt{T} \xrightarrow{} 0$ as $T\xrightarrow{}\infty$.

\subsection{Propositions}
\paragraph{Proposition 1.}
The propensity score is a balancing score. For any $t \in \gamma$,
\begin{equation*}
    Z_t \upmodels h_{\leq t} | e_t(z_t).
\end{equation*}

\paragraph{Proposition 2.}
Dimensional reduction property of the propensity score: if $Z_t \upmodels H_{\leq t} | h_{\leq t}$  then $ Z_t \upmodels H_{\leq t} | e_t(z_t)$.

\paragraph{Proposition 3: The consistency and asymptotic normality of the estimator.}
Let $Var$ denote the Variance, $N_\omega(Y_t)$ denote the number of the observed outcome $Y_t$ in region $\omega$. If all assumptions hold, and $T \rightarrow \infty$, we have that
\begin{equation*}
    \sqrt{T}(\hat{N}_\omega(F_H)-N_\omega(F_H)) \xrightarrow{d} N(0,v),
\end{equation*}
where 
$v = lim_{T \rightarrow \infty} \frac{1}{T-M+1} \sum^T_{t=M} v_t$
and
    $v_t = Var[\prod_{j=t-M+1}^{t} \frac{p_{h_j}(z_j)}{e_j(z_j)}N_\omega(Y_t)| H_{\leq t-M}].$

\subsection{Definition}\label{Def1}
\paragraph{Definition: Martingale Difference Series \citep{van2010time}.}
Let $(\Omega,\mathcal{F},\Pr)$ be a probability space, a filtration $\mathcal{F}_t=\{\mathcal{F}_t;t\geq 0\}$ is a non-decreasing collection of $\sigma-field$ on $\mathcal{F}$, (e.g. $\mathcal{F}_0 \subset \mathcal{F}_1 \subset ... \subset \mathcal{F}_t \subset ... \subset \mathcal{F}$).
Let $X_t=\{X_t;t\geq 0\}$ be a time series.
A martingale difference series relative to a given filtration is a time series $X_t$ such that, for any $t$:

(1) $X_t$ is $\mathcal{F}_t$ measurable.

(2) $E[|X_t|]<\infty$.

(3) $E[X_t|\mathcal{F}_{t-1}]=0$. 

\subsection{Theorem}\label{Th1}
\paragraph{Theorem: Central limit theorem for Martingale difference series \citep{van2010time}.}
If $X_t$ is a martingale difference series relative to the filtration $\mathcal{F}_t$, and there exists a constant $v>0$, such that $\frac{1}{n} \sum_{t=1}^n E[X_t^2|\mathcal{F}_{t-1}] \xrightarrow{p} v$, and such that $\frac{1}{n} \sum_{t=1}^n E[X_t^2 I_{|X_t|>\epsilon\sqrt{n}}|\mathcal{F}_{t-1}] \xrightarrow{p} 0$ for any $\epsilon>0$, then $\sqrt{n} \overline{X_n} \xrightarrow{d} N(0,v)$.

\subsection{Proofs for Propositions}
\paragraph{Proofs for Proposition 1.}
We need to show that $\mathbb{P}(Z_t = z_t|e_t(z_t),h_{\leq t}) = \mathbb{P}(Z_t = z_t|e_t(z_t))$. 
Since $e_t(z_t)$ is a function of $h_{\leq t}$ then $\mathbb{P}(Z_t = z_t|e_t(z_t),h_{\leq t}) = \mathbb{P}(Z_t = z_t|h_{\leq t}) = e_t(z_t)$,
\begin{equation*}
    \mathbb{P}(Z_t = z_t|e_t(z_t))=E[\mathbb{P}(Z_t=z_t|h_{\leq t})|e_t(z_t)]=E[e_t(z_t)|e_t(z_t)]=e_t(z_t).
\end{equation*}
Based on the above, we prove the Proposition 1.

\paragraph{Proofs for Proposition 2.}
We need to show that $\mathbb{P}(Z_t=z_t|H_{\leq t},e_t(z_t))=\mathbb{P}(Z_t = z_t|e_t(z_t))$.
Since $e_t(z_t)$ is a function of $h_{\leq t}$ and $h_{\leq t} \subset H_{\leq t}$, then $\mathbb{P}(Z_t=z_t|H_{\leq t},e_t(z_t)) = \mathbb{P}(Z_t=z_t|H_{\leq t})$
and
\begin{align*}
    \mathbb{P}(Z_t=z_t|H_{\leq t}) &= \mathbb{P}(Z_t=z_t|H_{\leq t},h_{\leq t}) \\
    &= \mathbb{P}(Z_t=z_t|h_{\leq t}) \quad \text{(Unconfoundedness)} \\
    &= e_t(z_t) \\
    &= \mathbb{P}(Z_t=z_t|e_t(z_t)).
\end{align*}
Based on the above, we prove the Proposition 2.

\paragraph{Proofs for Proposition 3.}
Let $Err_t=\hat{N}_t^\omega(F_H)-N_t^\omega(F_H)$ represent the estimation error at time $t$.
We divide the estimation error $Err_t$ into two parts: the first part is $E_{1t}$, which comes from the treatment allocation, and the second part is $E_{2t}$, which comes from the spatial smoothing of our neural network.
To be specific,
\begin{equation*}
    Err_t = \prod_{j=t-M+1}^{t} \frac{p_{h_j}(z_j)}{e_j(z_j)} \int_\omega \lambda_{S_{Y_{t}^{ob}(z_{\leq t})}}(s) ds - N_t^\omega(F_H)
    = E_{1t} + E_{2t}
\end{equation*}
where
\begin{equation*}
    E_{1t} = \prod_{j=t-M+1}^{t} \frac{p_{h_j}(z_j)}{e_j(z_j)} N_\omega(Y_t) - N_t^\omega(F_H)
\end{equation*}
and
\begin{equation*}
    E_{2t} = \prod_{j=t-M+1}^{t} \frac{p_{h_j}(z_j)}{e_j(z_j)} \int_\omega \lambda_{S_{Y_{t}^{ob}(z_{\leq t})}}(s) ds - N_\omega(Y_t).
\end{equation*}
We will show that,

$(\romannumeral 1)$ $\sqrt{T}(\frac{1}{T-M+1} \sum_{t=M}^T E_{1t}) \xrightarrow{d} N(0,v)$,

$(\romannumeral 2)$ $\sqrt{T}(\frac{1}{T-M+1} \sum_{t=M}^T E_{2t}) \xrightarrow{} 0$.

\paragraph{Proof of the asymptotic normality of the first part of the estimation error.}
We use Definition \ref{Def1}, the definition of the Martingale difference series, and Theorem \ref{Th1}, the central limit theorem for the Martingale difference series to prove $(\romannumeral 1)$ $\sqrt{T}(\frac{1}{T-M+1} \sum_{t=M}^T E_{1t}) \xrightarrow{d} N(0,v)$. 

\textbf{Lemma 1.} $E_{1t}$ is a martingale difference series with respect to the filtration $\mathcal{F}_t = H_{\leq t-M+1}$.

\textbf{Proof for Lemma 1:} We need to show that,

$(1)$ $E_{1t}$ is $H_{\leq t-M+1}$ measurable,

$(2)$ $E[|E_{1t}|]<\infty$,

$(3)$ $E[E_{1t}|\mathcal{F}_{t-1}]=E[E_{1t}|H_{\leq t-M}]=0$.

It is easy to prove that $(1)$ holds from the definitions of $E_{1t}$ and $H_{\leq t-M+1}$.
From Assumption 2 and Assumption 3 $(c)$ we have 
\begin{equation*}
    |E_{1t}| \leq |\prod_{j=t-M+1}^{t} \frac{p_{h_j}(z_j)}{e_j(z_j)} N_\omega(Y_t)| + |N_t^\omega(F_H)| < \delta_Y(1+\delta^{-M}_z) < \infty.
\end{equation*}
Therefore, $E[|E_{1t}|]<\delta_Y(1+\delta^{-M}_z)<\infty$, $(2)$ is proved.

For the proof of $(3)$, we will show that
\begin{equation*}
    E[\prod_{j=t-M+1}^{t} \frac{p_{h_j}(z_j)}{e_j(z_j)} N_\omega(Y_t)|H_{\leq t-M}] = N_t^\omega(F_H),
\end{equation*}

\begin{align*}
    E[\prod_{j=t-M+1}^{t} \frac{p_{h_j}(z_j)}{e_j(z_j)} N_\omega(Y_t)|H_{\leq t-M}] &= \int \prod_{j=t-M+1}^{t} \frac{p_{h_j}(z_j)}{e_j(z_j)} N_\omega(Y_t^{ob}(z_{\leq t}(F_H))) \times \mathbb{P}(z_{t-M+1}|H_{\leq t-M}) \\
    & \times \mathbb{P}(z_{t-M+2}|H_{\leq t-M},z_{t-M+1}) \times ... \times \mathbb{P}(z_{t}|H_{\leq t-M},z_{[t-M+1,t-1]}) dz_{[t-M+1,t]} \\
    &= \int \prod_{j=t-M+1}^{t} \frac{p_{h_j}(z_j)}{e_j(z_j)} N_\omega(Y_t^{ob}(z_{\leq t}(F_H))) \times \mathbb{P}(z_{t-M+1}|H_{\leq t-M}) \\ 
    & \times \mathbb{P}(z_{t-M+2}|H_{\leq t-M+1}) \times ... \times \mathbb{P}(z_{t}|H_{\leq t-1}) dz_{[t-M+1,t]}  \\
    &= \int \prod_{j=t-M+1}^{t} p_{h_j}(z_j) N_\omega(Y_t^{ob}(z_{\leq t}(F_H))) dz_{[t-M+1,t]} \quad \text{(Unconfoundedness)} \\
    &= N_t^\omega(F_H).
\end{align*}

Then we have 
\begin{align*}
    E[E_{1t}|H_{\leq t-M}] &= E[\prod_{j=t-M+1}^{t} \frac{p_{h_j}(z_j)}{e_j(z_j)} N_\omega(Y_t)|H_{\leq t-M}] - E[N_t^\omega(F_H)|H_{\leq t-M}] \\
    &= N_t^\omega(F_H) - N_t^\omega(F_H) \\
    &= 0.
\end{align*}
Based on the above, we prove the Lemma 1.

\textbf{Lemma 2.} $\frac{1}{T-M+1} \sum_{t=M}^T E[E^2_{1t} I(|E_{1t}|>\epsilon \sqrt{T-M+1} | \mathcal{F}_{t-1})] \xrightarrow{p} 0$, for any $\epsilon > 0$.

\textbf{Proof for Lemma 2.} We will show $I(|E_{1t}|>\epsilon \sqrt{T-M+1} | \mathcal{F}_{t-1}) \xrightarrow{} 0$ as $T \xrightarrow{} \infty$.
From Assumption 2 and Assumption 3 $(b)$, we can obtain $|E_{1t}|<\delta_Y(1+\delta_z^{-M})$.
We show $\delta_Y(1+\delta_z^{-M}) \leq \epsilon \sqrt{T-M+1}$, as $T \xrightarrow{} \infty$,
\begin{align*}
    \epsilon^{-1}\delta_Y(1+\delta_z^{-M}) &\leq \sqrt{T-M+1} \\
    [\epsilon^{-1}\delta_Y(1+\delta_z^{-M})]^2 &\leq T-M+1 \\
    M-1+[\epsilon^{-1}\delta_Y(1+\delta_z^{-M})]^2 &\leq T,
\end{align*}
let $T_0 = \lceil M-1+[\epsilon^{-1}\delta_Y(1+\delta_z^{-M})]^2 \rceil$, when $T\geq T_0$, we have $\delta_Y(1+\delta_z^{-M}) \leq \epsilon \sqrt{T-M+1}$ and $|E_{1t}|< \epsilon \sqrt{T-M+1}$.
Therefore, we have $I(|E_{1t}|>\epsilon \sqrt{T-M+1} | \mathcal{F}_{t-1}) \xrightarrow{} 0$ as $T \xrightarrow{} \infty$.
Based on the above, we prove the Lemma 2.

We have $E[E_{1t}^2|\mathcal{F}_{t-1}]=Var[E_{1t}|\mathcal{F}_{t-1}]=Var[\prod_{j=t-M+1}^{t} \frac{p_{h_j}(z_j)}{e_j(z_j)} N_\omega(Y_t)| H_{\leq t-M}]$, from Assumption 3$(b)$, we have
\begin{equation*}
    \frac{1}{T-M+1} \sum_{t=M}^T E[E_{1t}^2|\mathcal{F}_{t-1}] \xrightarrow{p} v.
\end{equation*}
Combining Lemma 1, Lemma 2, and the Central limit theorem for the Martingale difference series \ref{Th1}, we have
\begin{equation*}
    \sqrt{T}(\frac{1}{T-M+1} \sum_{t=M}^T E_{1t}) \xrightarrow{d} N(0,v).
\end{equation*}
Based on the above, we prove the asymptotic normality of the first part of the estimation error.

\paragraph{Proof of the convergence in probability of the second part of the estimation error to zero.}
The second part of the estimation error represents the difference between the integral of the outcome of the neural network smoothing and the actual number of outcomes.
We will show 
\begin{equation*}
    \sqrt{T}(\frac{1}{T-M+1} \sum_{t=M}^T E_{2t}) \xrightarrow{} 0.
\end{equation*}
Let $\alpha_t=\prod_{j=t-M+1}^{t} \frac{p_{h_j}(z_j)}{e_j(z_j)}$, then
\begin{equation*}
    |\frac{1}{T-M+1} \sum_{t=M}^T E_{2t}| = |\frac{1}{T-M+1} \sum_{t=M}^T \alpha_t[\int_\omega \lambda_{S_{Y_{t}^{ob}(z_{\leq t})}}(s) ds - N_\omega(Y_t)]|.
\end{equation*}
From Assumption 2, we obtain $\alpha_t < \delta_z^{-M}$.
From Assumption 3$(c)$, we obtain $|\int_\omega \lambda_{S_{Y_{t}^{ob}(z_{\leq t})}}(s) ds - N_\omega(Y_t)| < \beta$.
Then we have
\begin{align*}
    |\frac{1}{T-M+1} \sum_{t=M}^T E_{2t}| &< \delta_z^{-M} |\frac{1}{T-M+1} \sum_{t=M}^T [\int_\omega \lambda_{S_{Y_{t}^{ob}(z_{\leq t})}}(s) ds - N_\omega(Y_t)]| \\
    &\leq \delta_z^{-M} \frac{1}{T-M+1} \sum_{t=M}^T |\int_\omega \lambda_{S_{Y_{t}^{ob}(z_{\leq t})}}(s) ds - N_\omega(Y_t)| \\
    & < \delta_z^{-M} \frac{1}{T-M+1} (T-M+1) \beta \\
    & = \delta_z^{-M} \beta.
\end{align*}
Because $\beta$ can be arbitrarily small so $\delta_z^{-M} \beta$ can also be arbitrarily small, combined with Assumption 3 $(d)$, we conclude that $\sqrt{T}(\frac{1}{T-M+1} \sum_{t=M}^T E_{2t}) \xrightarrow{} 0$.

Combining the proof of the asymptotic normality of the first part of the estimation error and the proof of the convergence to 0 of the second part of the estimation error, we prove $\sqrt{T}(\hat{N}_\omega(F_H)-N_\omega(F_H)) \xrightarrow{d} N(0,v)$ as $T \xrightarrow{} \infty$.

\section{Identifiability Proof}\label{identify}
This section provides proof of identifiability for the estimands. Under the potential outcomes framework, identifiability refers to whether an estimand can be represented using observable data. The formal definition is given below:

\begin{definition}
\textbf{(Identifiability)}
A parameter $\theta$ is said to be identifiable if it can be expressed as a function of the distribution of observed data under certain assumptions. The parameter $\theta$ is said to be nonparametrically identifiable if it can be expressed as such a function without any parametric assumptions on the model.
\end{definition}

In the above definition, the parameter $\theta$ refers to the estimation target, which could be a causal effect or a counterfactual outcome. Identifiability is crucial in observational studies because an estimand must be identifiable in order to be estimable from observational data. If the estimation target depends on unobservable data, then it is non-identifiable and cannot be estimated. To derive the identifiability of the estimation target in this work, we introduce the following reasonable assumptions:
% \vspace{-7pt}

\textbf{Assumption (Ignorability).}
We assume that, given the observed history $h_{\leq t}$, the treatment variable $Z_t$ is independent of the latent history $H_{\leq t}$:
\begin{equation}
	Z_t \perp\!\!\perp H_{\leq t} \mid h_{\leq t} \text{.}
\end{equation}
Note that \citet{papadogeorgou2022causal} assume that $Z_t$ is independent of all past and future potential outcomes and covariates given $h_{\leq t}$. Their assumption is overly strict, as the future cannot causally affect the past (due to the unidirectional nature of causality in time). Therefore, we relax their assumption by only requiring $Z_t$ to be independent of the latent history $H_{\leq t}$.

%\vspace{-5pt}
\textbf{Assumption (Poisson Assumption).}
We assume that $Z_t$, $Y_t(Z_{\leq t})$, and $Z_t \mid h_{\leq t-1}$ are generated by spatial Poisson point processes. This assumption is reasonable because spatial Poisson point processes are widely used statistical tools to model the distribution of discrete events over spatial regions \cite{cressie2015statistics}.

\textbf{Assumption (Consistency).}
At any time and in any region, if a treatment sequence $z_{\leq t}$ is assigned, then the observed outcome $Y_t^{ob}$ must equal the potential outcome under that treatment, i.e.,
\begin{equation}
	Y_t^{ob} = Y_t(Z_{\leq t} = z_{\leq t}) \text{.}
\end{equation}
This assumption ensures that the observed outcomes are consistent with the potential outcomes, enabling the estimation of counterfactual outcomes and causal effects.

\begin{theorem}[Identifiability of the Estimation Target]
If Assumptions Ignorability, Poisson, and Consistency hold, then the estimands $N^\omega_t(F_H)$ is identifiable.
\end{theorem}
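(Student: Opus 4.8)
The plan is to express $N^\omega_t(F_H)$ entirely in terms of the distribution of the observed data $h_{\leq t}$, using the three stated assumptions. The target, by Eq.~\eqref{eq2}, is $N^\omega_t(F_H) = \int_{Z^M} |S_{Y_t^{ob}(z_{\leq t}(F_H))} \cap \omega| \, dF_H(z_{[t-M+1,t]})$, so the only potentially problematic object is the potential outcome $Y_t^{ob}(z_{\leq t}(F_H))$ evaluated at a treatment path $z_{\leq t}(F_H)$ that need not occur in the observational data. First I would fix a realization of the pre-intervention history $H_{\leq t-M}$ and write the estimand as an iterated expectation over the intervened treatments $z_{t-M+1},\dots,z_t$ drawn from $F_H$. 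The key is then to replace, one time step at a time, the counterfactual treatment assignment with the observational one.

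The core device is the standard g-formula / sequential-exchangeability argument adapted to this spatial-temporal setting. For the final step, I would write $E[\,|S_{Y_t^{ob}(z_{\leq t})}\cap\omega|\,]$ under $z_t \sim F_{h_M}$ and invoke Consistency to identify $Y_t^{ob}$ with the potential outcome under the actually-assigned $z_{\leq t}$; then Ignorability ($Z_t \upmodels H_{\leq t}\mid h_{\leq t}$) lets me rewrite the counterfactual-weighted expectation as an observational conditional expectation weighted by the importance ratio $p_{h_M}(z_t)/e_t(z_t)$, exactly as in the IPW estimator of Eq.~\eqref{eq8}. Iterating this backward through $t-1, t-2, \dots, t-M+1$ peels off one factor $p_{h_j}(z_j)/e_j(z_j)$ at each stage, using at step $j$ the conditional independence of $Z_j$ from the latent history given $h_{\leq j}$. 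After $M$ such steps the estimand is expressed as
\begin{equation*}
N^\omega_t(F_H) = E\!\left[\,\prod_{j=t-M+1}^{t} \frac{p_{h_j}(z_j)}{e_j(z_j)}\, \big|S_{Y_t^{ob}(z_{\leq t})}\cap\omega\big| \,\right],
\end{equation*}
where the expectation and all quantities inside are functions of the observed-data distribution: $e_j(z_j)=\mathbb{P}(Z_j=z_j\mid h_{\leq j-1})$ is an observable propensity, $p_{h_j}$ is a known density of the (user-specified) intervention, and $|S_{Y_t^{ob}(z_{\leq t})}\cap\omega|$ is an observed outcome count. Finally I would use the Poisson Assumption to note that each $e_j$ and the outcome intensity are determined by the Poisson intensity functions, so no parametric modeling assumption beyond the stated ones is needed, completing the identification; the same argument then transfers to $N_\omega(F_H)$ via Eq.~\eqref{eq4} by linearity.

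The main obstacle I anticipate is making the backward induction rigorous when the treatments are high-dimensional spatial point patterns rather than scalars: the "one step at a time" rewriting requires that conditioning on the intervened past treatments $F_{h_1}(z_{t-M+1}),\dots$ does not disturb the exchangeability needed at the next step, i.e., that $Z_j \upmodels H_{\leq j} \mid h_{\leq j}$ continues to hold along the intervened path. This is where the precise form of the Ignorability assumption and the independence structure of $F_H = F_{h_1}\times\cdots\times F_{h_M}$ must be used carefully — in particular one must verify that the intervention on $Z_j$ depends only on history through $h_{\leq j-1}$ and introduces no new dependence on the latent outcomes. A secondary technical point is justifying the interchange of integration (over $Z^M$) and expectation, which follows from the boundedness assumptions (Assumption 3(a) and the Overlap assumption bounding $e_j/p_{h_j}$ away from zero) but should be stated. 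Once the sequential structure is set up correctly, the remaining manipulations are the routine change-of-measure calculations already sketched in the derivation of the IPW estimator.
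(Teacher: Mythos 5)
Your proposal is correct, but it reaches identifiability by a genuinely different route than the paper's own proof of this theorem. The paper's argument is a short nested-expectation manipulation: it applies the tower property to $E[Y_t^{ob}(z_{\leq t}(F_H))]$, uses Ignorability to insert the conditioning event $z_{\leq t}\sim F_H$ (since the potential outcome lies in $H_{\leq t}$), and then uses Consistency twice to replace the potential outcome by the observed one, with the Poisson assumption supplying the known form of the relevant distributions. You instead identify the estimand \emph{constructively}, by the sequential change-of-measure that yields the explicit IPW representation $N^\omega_t(F_H)=E\bigl[\prod_{j=t-M+1}^{t}\tfrac{p_{h_j}(z_j)}{e_j(z_j)}\,|S_{Y_t^{ob}(z_{\leq t})}\cap\omega|\bigr]$; this is essentially the same computation the paper performs elsewhere, in the proof of Lemma~1(3) inside Proposition~3, where the product of observational conditionals $\mathbb{P}(z_j\mid H_{\leq j-1})$ cancels the propensity denominators and leaves the intervention density. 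Your route buys more: it produces the actual identification formula that the estimator of Eq.~\eqref{eq8} targets, and it makes explicit where each conditional-independence step is used, whereas the paper's version leaves the observable functional implicit. The price is that your argument formally invokes the Overlap assumption (to divide by $e_j(z_j)$), which is not among the three hypotheses listed in the theorem statement; you flag this correctly, and in fairness some positivity condition is implicitly required by the paper's version too, since conditioning on treatment values of measure zero under the observational law would be vacuous. Neither gap is fatal, but if you present your version you should add Overlap to the hypotheses, and you should state precisely (as you anticipate in your final paragraph) that the backward induction at step $j$ conditions only on $h_{\leq j-1}$ so that the product structure of $F_H=F_{h_1}\times\cdots\times F_{h_M}$ is preserved.
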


\begin{proof}
	 By the definition of $N^\omega_t(F_H)$, it suffices to show that $E[Y_t^{ob}(z_{\leq t}(F_H))]$ is identifiable.
	\begin{align}
		E[Y_t^{ob}(z_{\leq t}(F_H))] &= E[E[Y_t^{ob}(z_{\leq t}(F_H)) \mid h_{\leq t}]] \\
		&= E[E[Y_t^{ob}(z_{\leq t}(F_H)) \mid h_{\leq t}, z_{\leq t} \sim F_H]] \\
		&= E[E[Y_t(Z_{\leq t} \sim F_H) \mid h_{\leq t}, z_{\leq t} \sim F_H]] \\
		&= E[E[Y_t^{ob} \mid h_{\leq t}, z_{\leq t} \sim F_H]] ,
	\end{align}
Equation (13) follows from the tower property of conditional expectation, (14) is derived using the ignorability assumption (since $Y_t^{ob}(z_{\leq t}(F_H)) \in H_{\leq t}$), and both (15) and (16) follow from the consistency assumption. The Poisson assumption ensures that the distributions of the potential outcomes and treatments are known, further reinforcing the identifiability of the estimation target.
\end{proof}

The above establishes the identifiability of the estimands, which guarantees the solvability of the proposed estimation problem.

\section{Derivation of Property of Poisson Point Process} \label{pointprocess}
Let $\lambda(s)$ denote the intensity function of a Poisson point process, $A$ denote a spatial region, $\mathcal{N}(A)$ denote the number of events occurring within region $A$, and $E[\mathcal{N}(A)]$ denote the expected number of events occurring within region $A$.

According to the property of the Poisson point process, we have the following:
\begin{equation*}
    \mathcal{N}(A) \sim \text{Poisson}(\lambda_A),
\end{equation*}
where $\text{Poisson}(\cdot)$ denotes the Poisson distribution and $\lambda_A$ is its parameter.

Then \begin{align*}
    E[\mathcal{N}(A)] = \sum_{i=0}^{\infty} i \mathbb{P}(\mathcal{N}(A)=i) 
    = \sum_{i=1}^{\infty} i \frac{e^{-\lambda_A} \lambda_A^i}{i!}
    = \lambda_A e^{-\lambda_A} \sum_{i=1}^{\infty} \frac{\lambda_A^{i-1}}{(i-1)!} 
    % = \lambda_A e^{-\lambda_A} \sum_{i=0}^{\infty} \frac{\lambda_A^{i}}{i!} 
    = \lambda_A e^{-\lambda_A} e^{\lambda_A} 
    = \lambda_A.
\end{align*}
According to the definition of the intensity function we have:
\begin{equation*}
    E[\mathcal{N}(A)] = \int_A \lambda(s) ds = \lambda_A.
\end{equation*}

\section{Derivation of the Training Objective Function} \label{objective}
The objective function Eq. \eqref{eq15} is mainly derived from the likelihood function of the Poisson point process and the objective function of the variational autoencoder (VAE). Specifically, we employ the idea of maximum likelihood estimation (MLE), and the likelihood function of the Poisson point process is shown below:
\begin{equation*}
    \sum_i log(\lambda^*(s_i)) - \int_S \lambda^*(u)du.
\end{equation*}
MLE for the intensity function of the Poisson point process seeks the optimal intensity function $\lambda^*(\cdot)$ from the data that optimizes the above function. As for the $KL(q||p)$ in Eq. \eqref{eq15}, it's a common component of VAE. Therefore, our method can be seen as a combination of MLE and VAE. We use a neural network as the intensity function and seek the optimal intensity function by optimizing the objective function Eq. \eqref{eq15}.

\section{Explanation of the Treatment Intervention} \label{M explanation}
In this section, we provide a brief explanation of why the intervention treatment duration $M$ has an impact on the results. In our problem setting, we assume treatments have delayed effects, i.e., temporal carryover. For example, an educational program results in poorer performances at first, as students adjust to new learning methods, but eventually, they may get long-term improvements. Therefore, under our spatial-temporal setting, the larger $M$ may influence the estimation.

\section{Explanation of the term ``Counterfactual Probability"} \label{counter explanation}
In this section, we provide a brief explanation of why the counterfactual probability is ``counterfactual". Recall that $p_h(z_t)$ is the probability density function of intervention distribution $F_h(\cdot)$. In this work, we aim to estimate the number of potential outcomes when the treatment follows an intervention distribution. We specify this intervention distribution, which does not necessarily exist in the observable data. For example, in the observable data, treatment may follow a distribution $P_A$, and we want to find out what will happen to outcomes when treatment follows another distribution $P_B$. Therefore, the counterfactual probability $p_h(z_t)$ is counterfactual and it's counter to the observable data or real-world. 

\section{Additional Experiments and Their Results} \label{simulation results}

\begin{figure}[H]
    \centering
    \includegraphics[width=1\linewidth]{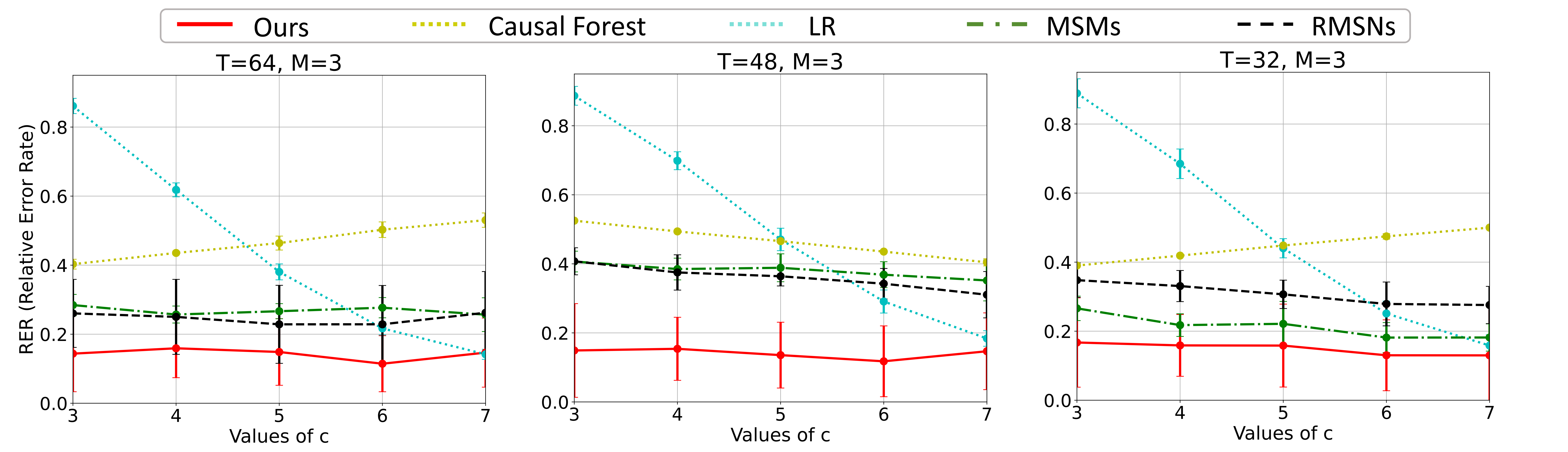}
    \caption{Experiments results of $M=3$. The horizontal axis represents the values of $c$, while the vertical axis represents the relative error rate (RER). The lower lines in the graph correspond to methods with higher estimation accuracy. From left to right, the three columns respectively represent the experimental results with time lengths of 64, 48, and 32 ($T=64,48,32$).}
    \label{addresults}
\end{figure}

According to Figure \ref{addresults}, in the different experiment settings of M=3, the estimation errors of our method are relatively low in most cases, indicating the robustness of our method concerning time lengths and intervention settings.

\section{Details of the Ablation Studies} \label{aba detail}
\subsection{Details of the RNNs} \label{rnn detail}
\subsubsection{Parameters of the RNNs}
We employ the Gated Recurrent Unit (GRU) network. The input\_size = 32, hidden\_size = 32, num\_layers = 3, dropout=0.1, batch\_first = True.

\subsubsection{Comparison Results of RNNs} \label{rnn results}

\begin{figure}[H]
    \centering
    \begin{subfigure}[b]{0.4\textwidth}
        \centering
        \includegraphics[width=\linewidth]{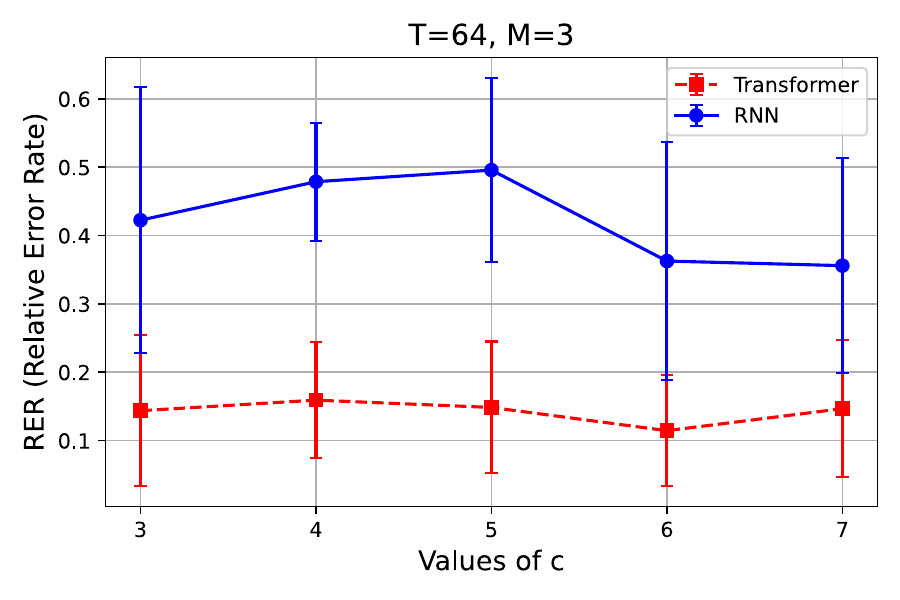}
        \caption{Comparison results of T=64, M=3.}
        \label{fig:rnn_t64_m3}
    \end{subfigure}
    \hspace{25pt} % 增加水平间距
    \begin{subfigure}[b]{0.4\textwidth}
        \centering
        \includegraphics[width=\linewidth]{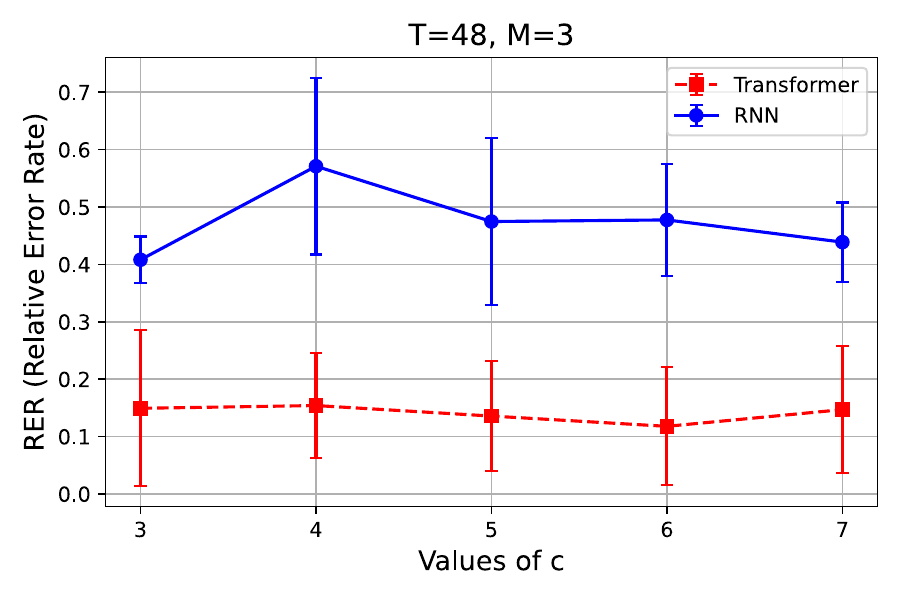}
        \caption{Comparison results of T=48, M=3.}
        \label{fig:rnn_t48_m3}
    \end{subfigure}
\end{figure}

\begin{figure}[H]
    \centering
    \begin{subfigure}[b]{0.4\textwidth}
        \centering
        \includegraphics[width=\linewidth]{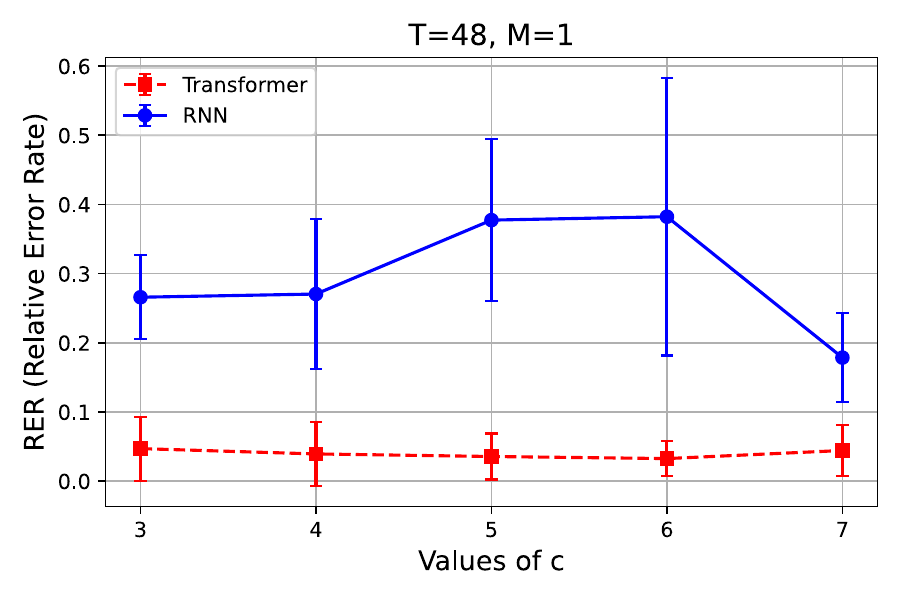}
        \caption{Comparison results of T=48, M=1.}
        \label{fig:rnn_t48_m1}
    \end{subfigure}
    \hspace{25pt} % 增加水平间距
    \begin{subfigure}[b]{0.4\textwidth}
        \centering
        \includegraphics[width=\linewidth]{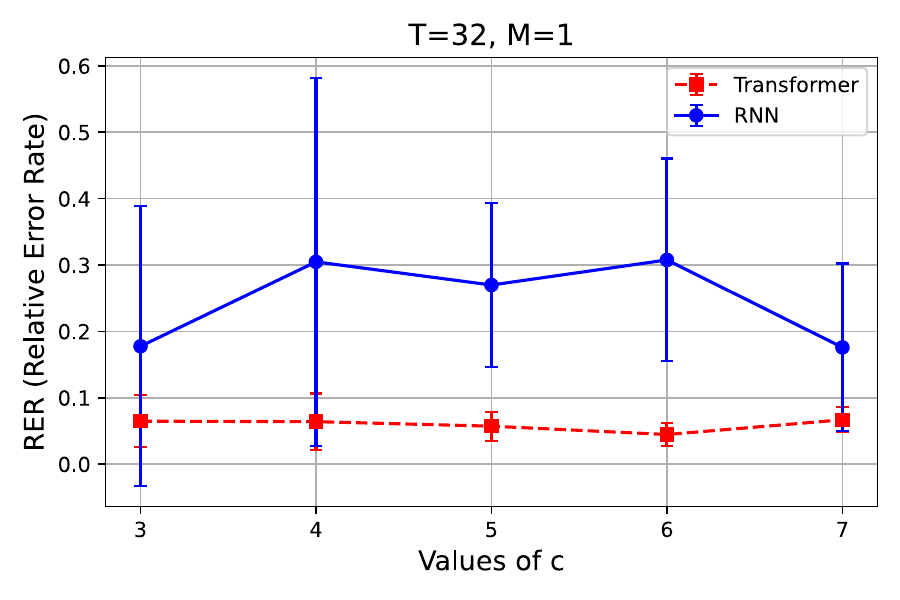}
        \caption{Comparison results of T=32, M=1.}
        \label{fig:rnn_t32_m1}
    \end{subfigure}
    \vspace{-15pt}
\end{figure}

\begin{figure}[H]
    \centering
    \includegraphics[width=0.4\linewidth]{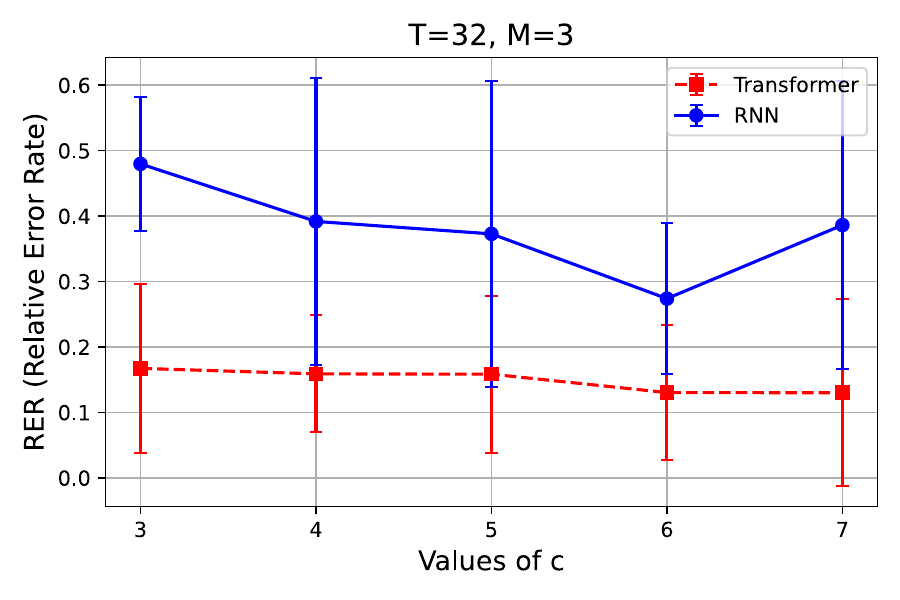}
    \caption{Comparison results of T=32, M=3.}
    \label{fig:rnn_t32_m3}
    \vspace{-10pt}
\end{figure}

According to the above comparison results with RNNs, the Transformer backbone is superior to the RNNs in most settings. 

\subsection{Details of the Relaxation of Poisson Assumption} \label{gaussian detail}
\subsubsection{Settings of Intensity Functions} 
To relax the Poisson assumption, we add the Gaussian kernels to the intensity functions (see Appendix \ref{intensity functions}) that generate the synthetic data. Specifically, we replace the $Z^*_{t-1}(s)=e^{-2D_{Z_{t-1}}(s)}$ with Gaussian kernel of $e^{-\frac{D_{Z_{t-1}}^2(s)}{2\sigma^2}}$, $Y^*_{t-1}(s)=e^{-2D_{Y_{t-1}}(s)}$ with $e^{-\frac{D_{Y_{t-1}}^2(s)}{2\sigma^2}}$, and $Z^*_{[t-3,t]}(s)=e^{-2D_{Z_{[t-3,t]}}(s)}$ with $e^{-\frac{D_{Z_{[t-3,t]}}^2(s)}{2\sigma^2}}$. The $\sigma$ is set to a constant of $\frac{1}{\sqrt{2}}$.   

\subsubsection{Comparison Results}
\begin{figure}[H]
    \centering
    \begin{subfigure}[b]{0.4\textwidth}
        \centering
        \includegraphics[width=\linewidth]{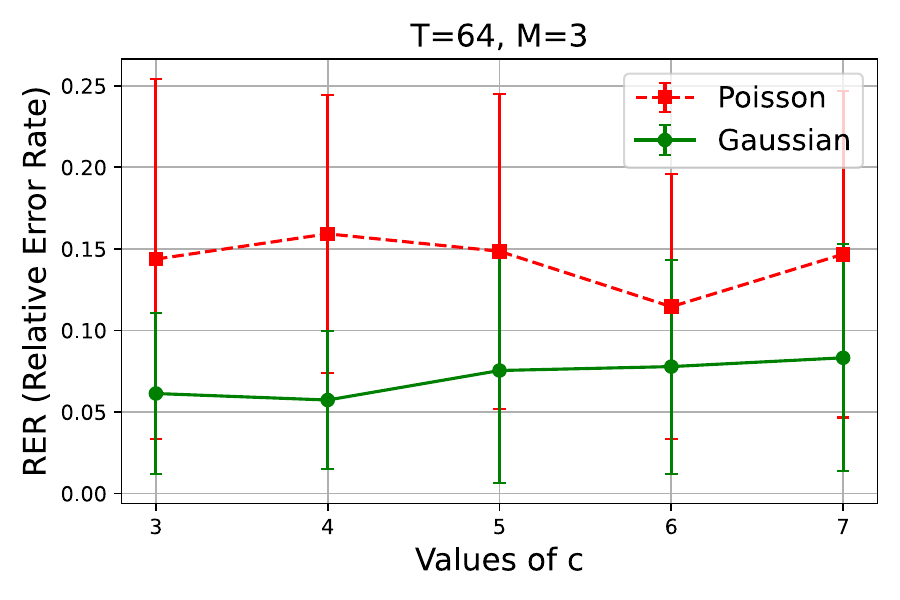}
        \caption{Relaxing results of T=64, M=3.}
        \label{fig:g_t64_m3}
    \end{subfigure}
    \hspace{25pt} % 增加水平间距
    \begin{subfigure}[b]{0.4\textwidth}
        \centering
        \includegraphics[width=\linewidth]{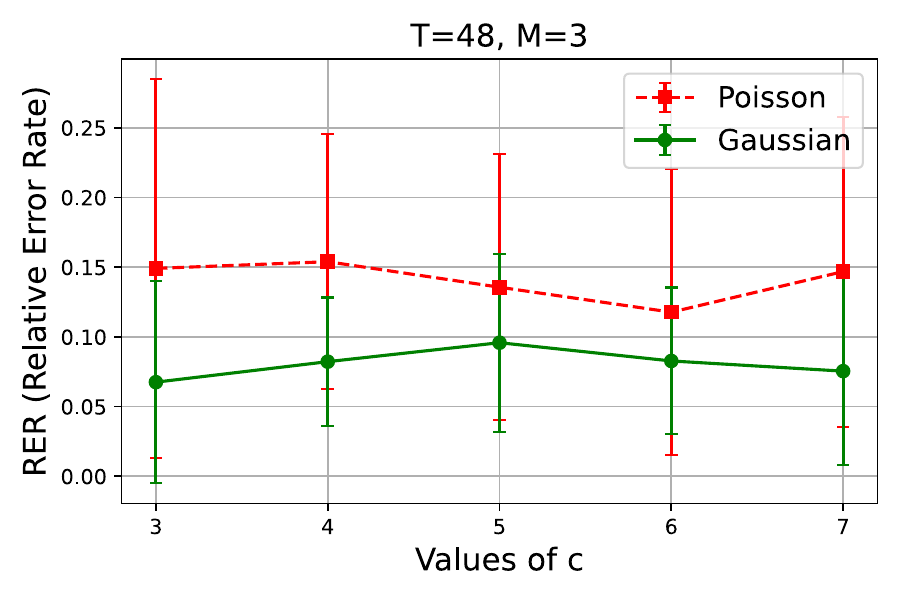}
        \caption{Relaxing results of T=48, M=3.}
        \label{fig:g_t48_m3}
    \end{subfigure}
\end{figure}

\begin{figure}[H]
    \centering
    \begin{subfigure}[b]{0.4\textwidth}
        \centering
        \includegraphics[width=\linewidth]{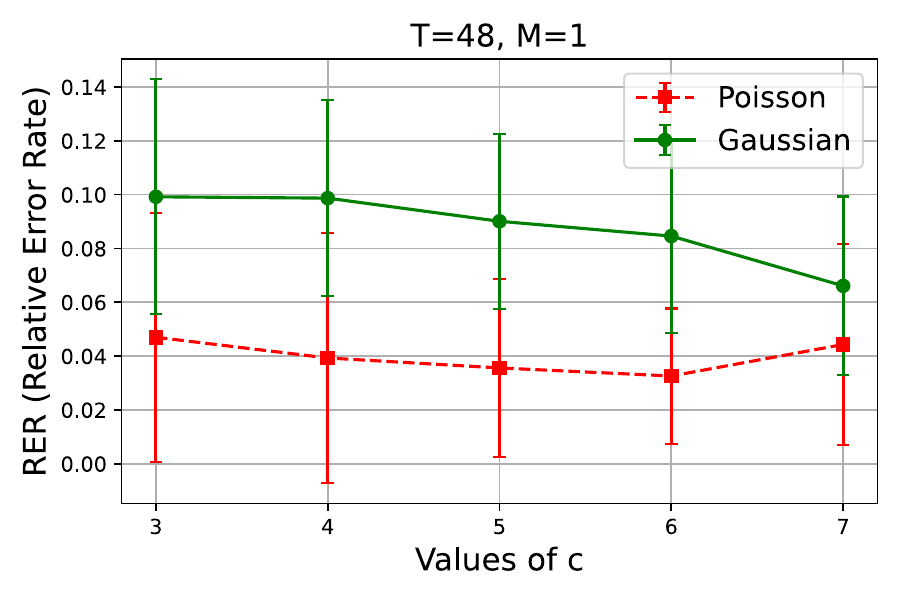}
        \caption{Relaxing results of T=48, M=1.}
        \label{fig:g_t48_m1}
    \end{subfigure}
    \hspace{25pt} % 增加水平间距
    \begin{subfigure}[b]{0.4\textwidth}
        \centering
        \includegraphics[width=\linewidth]{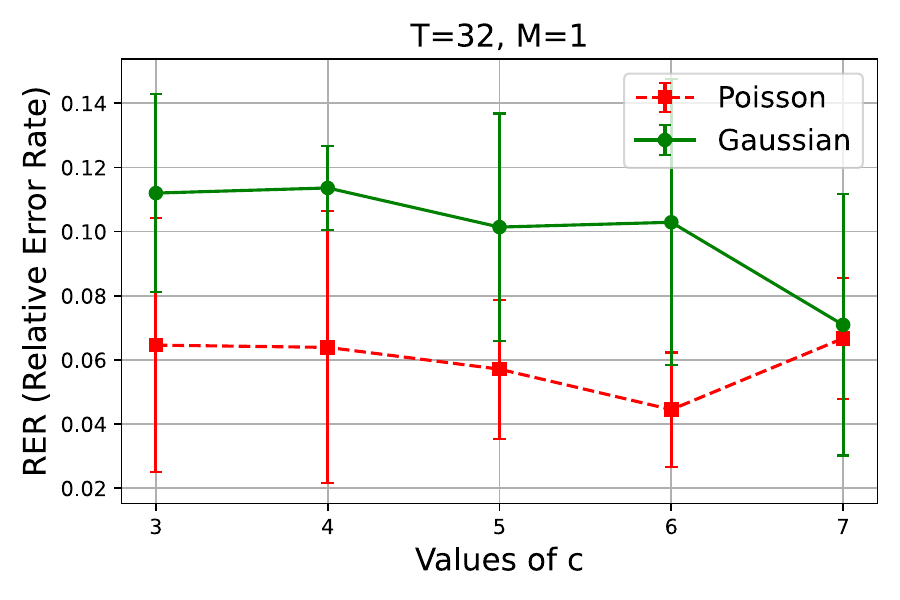}
        \caption{Relaxing results of T=32, M=1.}
        \label{fig:g_t32_m1}
    \end{subfigure}
    \vspace{-25pt}
\end{figure}

\begin{figure}[H]
    \centering
    \includegraphics[width=0.4\linewidth]{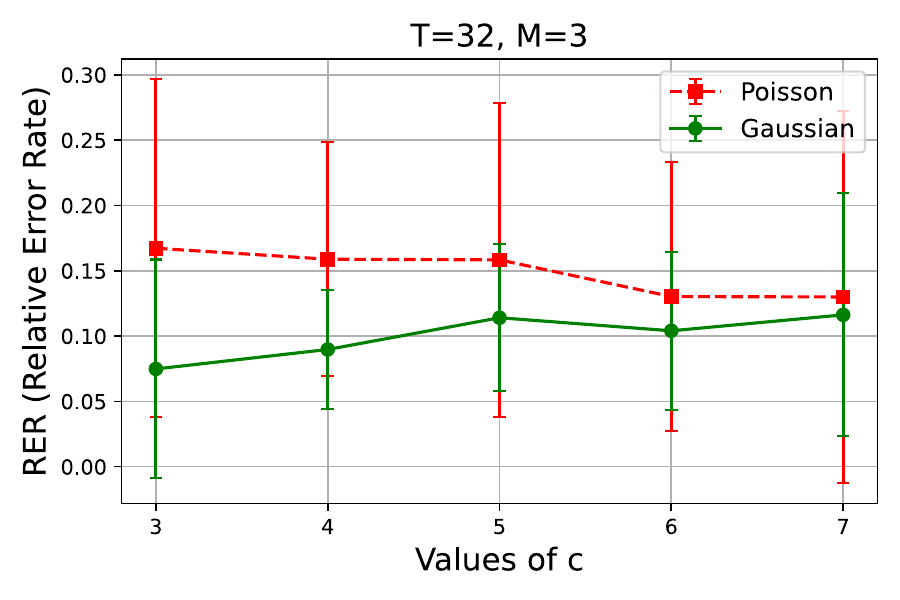}
    \caption{Relaxing results of T=32, M=3.}
    \label{fig:g_t32_m3}
    \vspace{-13pt}
\end{figure}

%\newpage

\section{The Design Details of Convolutional Neural Networks} \label{detailcnn}
\begin{figure}[H]
    \centering
    \rotatebox{90}{\includegraphics[width=0.17\linewidth]{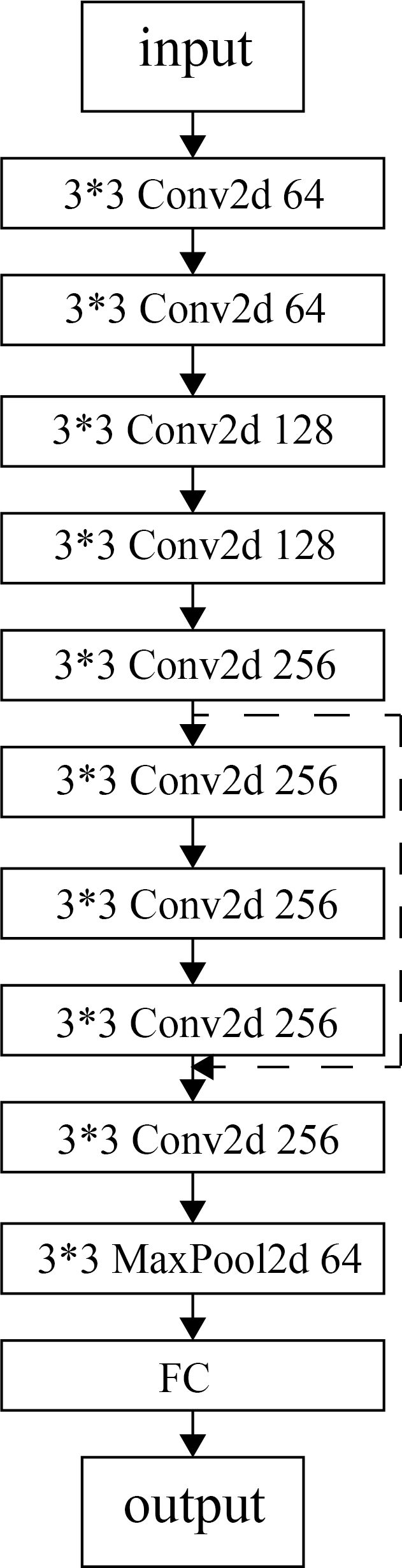}}
    \caption{The design details of Convolutional Neural Networks.}
    \label{cnn}
    \vspace{-13pt}
\end{figure}

\section{Details of the Real Dataset} \label{realdetail}
\begin{figure}[H]
    \centering
    \includegraphics[width=0.45\linewidth]{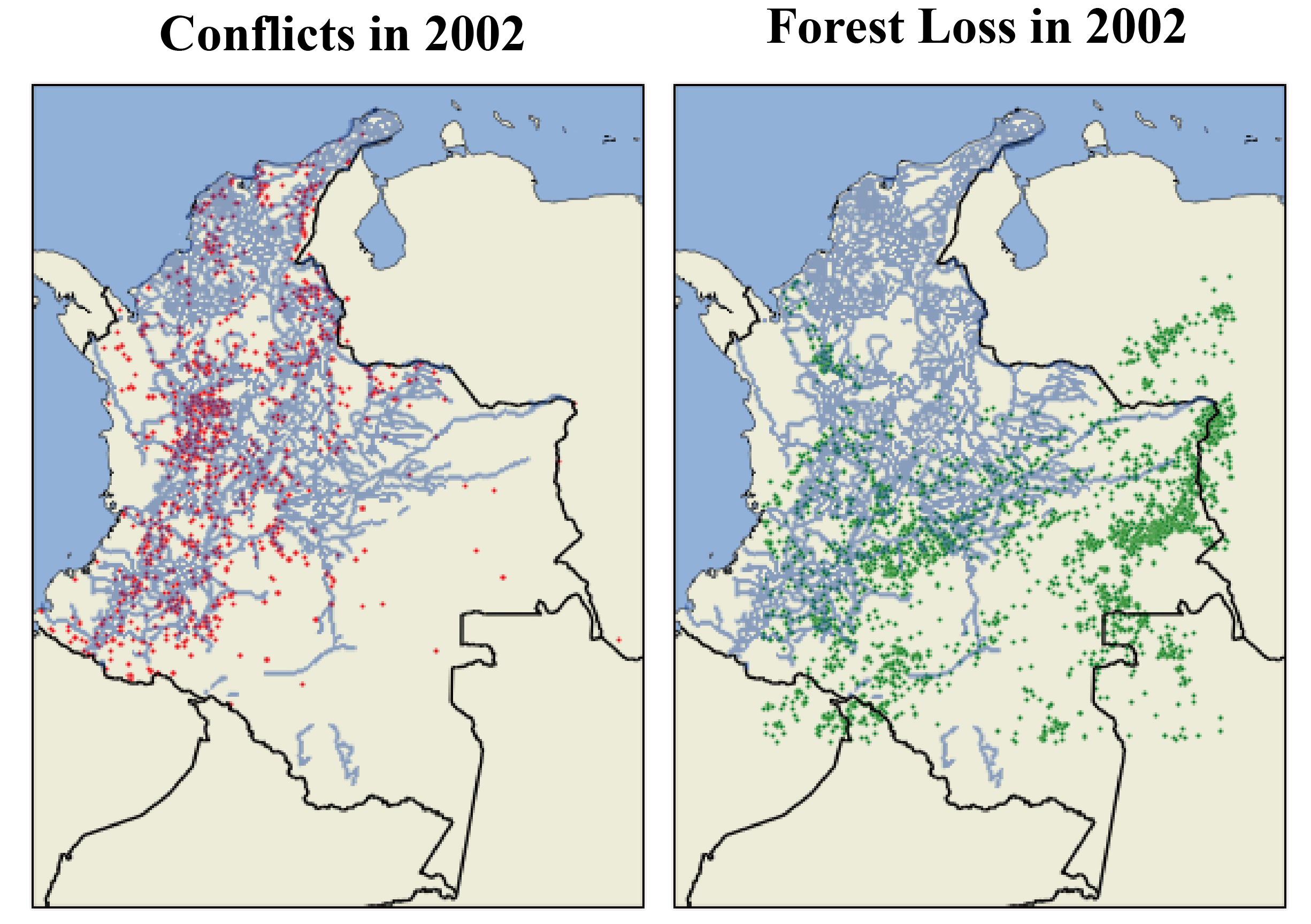}
    \caption{An example of the real dataset.}
    \label{oldfig1}
\end{figure}
An example of the real dataset is shown in Figure \ref{oldfig1}.
In Figure \ref{oldfig1}, the blue lines represent the roads in Colombia. The left part of the figure shows the conflicts in 2002 in Colombia, the red dots represent the conflict locations. The right part of the figure demonstrates the forest loss in 2002 in Colombia, the green dots denote the forest loss locations.

\subsection{Forest Change Data.}
We only consider the ``Year of gross forest cover loss event (loss year)'' section of the Global Forest Change dataset, which is a raster data matrix (.tif file). The matrix elements are 0, representing no forest loss events, and values from 1 to 22, representing forest loss events occurring from 2001 to 2022 \citep{hansen2013high}.
We selected a portion from the dataset that occurred within the territory of Colombia. We read the raster data matrix's elements from 2 to 22, calculating their latitude and longitude positions. As a result, we obtained the forest loss events that occurred in Colombia from 2002 to 2022.
Data is publicly available online from \url{https://glad.earthengine.app/view/global-forest-change.}

\subsection{UCDP Georeferenced Event Dataset.}
UCDP Georeferenced Event Dataset exists as tabular data (.xlsx file) \citep{croicu2015ucdp}.
We only consider the years in which conflicts occurred, the latitude and longitude coordinates of the conflict locations, and the country or region where the conflicts occurred, as contained in the tabular data.
We directly filtered conflict events within Colombia from the tabular data, spanning from 2002 to 2022.
Data is publicly available online from \url{https://ucdp.uu.se/downloads/index.html#ged_global}.

\subsection{Road Data.} \label{roads}
We select several major roads within Colombia from \href{https://maps.google.com}{Google Maps}.
Specifically, for convenience, we choose key points from the main roads of Colombia and connect these points to form a polyline, representing the road. The coordinates of selected key points are shown below:
\begin{itemize}
    \item Road1: (-77.66, 0.77), (-76.42, 3.09), (-76.06, 4.29).
    \item Road2: (-76.06, 4.29), (-74.03, 4.66), (-73.04, 7.05), (-72.38, 7.75).
    \item Road3: (-76.02, 4.31), (-75.54, 6.15), (-73.11, 7.09), (-74.22, 10.97), (-72.23, 11.33).
    \item Road4: (-75.52, 6.15), (-74.69, 10.93).
    \item Road5: (-75.52, 6.22), (-76.77, 8.42).
\end{itemize}
The first part of the coordinate is the longitude, and the second part is the latitude.

\section{Synthetic Data Generating Process} \label{syntheticdata}
We utilize reject sampling \citep{lavancier2015determinantal} to sample spatial point patterns of treatments and outcomes based on their intensity functions. Below, we introduce the setting for intensity functions.

\subsection{Intensity Functions Setting} \label{intensity functions}
We consider four covariates $X_1(s)$ and $X_2(s)$ and $X_3(s)$ and $X_4(s)$.
$X_1(s)=e^{-3D_1(s)}+log(D_2(s))$, $X_2(s)=e^{-3D_3(s)}$. 
$D_1(s)$ is the distance from the location $s$ to the road on spatial area.
$D_2(s)$ is the distance from $s$ to the border of the spatial area.
$D_3(s)$ is the distance from $s$ to the center of the spatial area.
$\lambda^{X_3}(s)=e^{a^3_0+a^3_1X_1(s)}$ and $\lambda^{X_4}(s)=e^{a^4_0+a^4_1X_1(s)}$ are two intensity function, we use the point patterns generated by them to create $X^3(s)$ and $X^4(s)$. 
Specifically, $a^3_0$, $a^3_1$, $a^4_0$ and $a^4_1$, are bias constant.
Let $D_3(s)$ represent the distance from $s$ to the closet points generated by $\lambda^{X_3}(s)$, $D_4(s)$ represent the distance from $s$ to the closet points generated by $\lambda^{X_4}(s)$.
Then $X_3(s)=e^{D_3(s)}$, $X_4(s)=e^{D_4(s)}$.

Based on all covariates, we determine the intensity function of the Poisson point process that generates treatment and outcome.
Let $X(s)=(X_1(s),X_2(s),X_3(s),X_4(s))$ denote the covariates vector, and $Z^*_{t-1}(s)=e^{-2D_{Z_{t-1}}(s)}$, $Y^*_{t-1}(s)=e^{-2D_{Y_{t-1}}(s)}$, $D_{Z_{t-1}}(s)$ is the distance from $s$ to the closet point in $S_{Z_{t-1}}$, 
$D_{Y_{t-1}}(s)$ is the distance from $s$ to the closet point in $S_{Y_{t-1}}$.
The intensity function for the Poisson point process that generates $Z_t$ is shown as follows:
\begin{equation}\label{eq16}
    \lambda_{Z_t}(s) = e^{\beta_0+\beta_X X(s)+\beta_Z Z^*_{t-1}(s)+\beta_Y Y^*_{t-1}(s)}.
\end{equation}
In Eq. \eqref{eq16}, $\beta_0$, $\beta_X$, $\beta_Z$, $\beta_Y$, are constant parameters.
Let $Z^*_{[t-3,t]}(s)=e^{-2D_{Z_{[t-3,t]}}(s)}$, $D_{Z_{[t-3,t]}}(s)$ is the distance from $s$ to the closet point in $\bigcup_{j=t-3}^{t} S_{Z_j}$.
The intensity function for the Poisson point process that generates $Y_t$ is shown as follows:
\begin{equation}\label{eq17}
    \lambda_{Y_t}(s) = e^{\gamma_0+\gamma_X X(s)+\gamma_Z Z^*_{[t-3,t]}(s)+\gamma_Y Y^*_{t-1}(s)}.
\end{equation}
In Eq. \eqref{eq17}, $\gamma_0$, $\gamma_X$, $\gamma_Z$, $\gamma_Y$, are also constant parameters. 
In the synthetic data generating process, we set $\beta_0=-1$, $\beta_X=(1,1,1,1)$, $\beta_Z=1$, $\beta_Y=1$, and $\gamma_0=1$, $\gamma_X=(1,1,1,1)$, $\gamma_Z=1$, $\gamma_Y=1$, $a^3_0=-0.2$, $a^3_1=2.3$, $a^4_0=-0.2$ and $a^4_1=2.8$. 

\subsection{Ground Truth Generation} \label{ground truth}
For the computation of the true counterfactual outcomes, we first employ the method described in Appendix \ref{intensity functions} to calculate the intensity function of the spatial Poisson point process generating
$Y_t^{ob}(z_{\leq t}(F_H))$.
Subsequently, we utilize this intensity function to generate samples of $Y_t^{ob}(z_{\leq t}(F_H))$.
Finally, the average of samples is computed as the true $N_t^\omega(F_H)$.

\subsection{Synthetic Data Region}
In synthetic data, the entire area is a rectangle with a length of 1 and a width of 1. For ease of computer processing, we divide this area into 100 squares.
We draw some lines on the rectangle, which we consider as roads. The synthetic data area is shown in Figure \ref{simdata}.
In Figure \ref{simdata}, the red line represents the straight road, while the green dashed line represents the curved road.
Based on the synthetic data region, we calculate the intensity function of the treatments and outcomes.
While it is possible to generate an arbitrary number of intensity functions, due to computational limitations, we have generated 32, 48, and 64 intensity functions for the treatments and outcomes, corresponding to time lengths of 32, 48, and 64.
Afterward, based on the intensity functions, we use rejection sampling to generate point patterns of treatments and outcomes. To be specific, twenty spatial point patterns are generated from each intensity function.

\begin{figure}[h]
    \centering
    \includegraphics[width=0.5\linewidth]{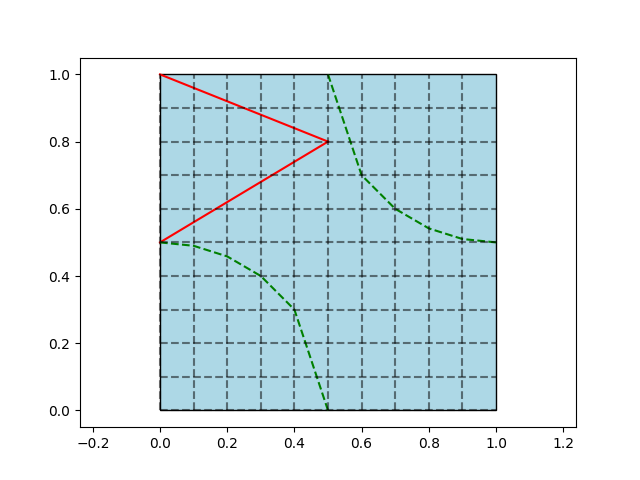}
    \caption{The synthetic spatial area. The red lines and green lines in the figure represent the synthetic roads.}
    \label{simdata}
\end{figure}

\section{Details of Baselines} \label{baselines}
\subsection{Baselines Adaption}
Now we introduce how we adapt baselines to our setting.
Since baselines cannot directly handle high-dimensional data such as series of spatial point patterns, we transform all treatments ($z_t$) and outcomes ($Y_t$) into the number of events contained in the treatments and outcomes ($R({z_t})$ and $R({Y_t})$), $R()$ is the dimension reduction map defined in Eq. \eqref{eq12} in the main text. 
Therefore, the data used for baselines are all scalar series.
For baselines like MSMs, RMSNs, and Linear Regression (LR), we fit scalar outcomes on treatments and covariates and make a comparison with our method.
For Causal Forest, we first use a linear regression model to fit scalar treatments on outcomes, and then train the causal forest model to estimate the treatment effects. Finally, we combine the causal forest and regression model to build the counterfactual outcomes.

\subsection{Marginal Structural Models (MSMs)}
Marginal Structural Models (MSMs) \cite{robins2000marginal} is the statistical tool used in observational studies to estimate causal effects, especially when dealing with time-varying exposures and confounders. MSMs employ the standard regression method as a base estimation model and adjust for these complexities using methods like inverse probability weighting (IPW) or g-estimation. In our configurations, we employ MSMs to fit outcomes on treatments and covariates and make a comparison with our method.

\subsection{Recurrent Marginal Structural Networks (RMSNs)}
Recurrent Marginal Structural Networks (RMSNs) \cite{lim2018forecasting} is a deep learning-based method to forecast counterfactual outcomes. Different from the MSMs, RMSNs employ the RNN model to build sequence-to-sequence architectures for counterfactual outcome prediction. To be specific, \cite{lim2018forecasting} employs RNNs to construct propensity networks and prediction networks, and combines these modules to form the inverse probability of treatment weighting (IPTW) estimation. In our configurations, we utilize RMSNs to fit outcomes on treatments and covariates. 

\subsection{Causal Forest} \label{forest}
Causal Forest \cite{wager2018estimation} is a random forest-based method for heterogeneous treatment effects estimation, the treatment effects are estimated at the leaves of the random trees.
We employ a Python library called EconML \url{https://econml.azurewebsites.net/} to realize the Causal Forest. Although EconML does not support the estimation of the counterfactual outcomes directly, for most estimators in EconML, we can combine a baseline predictive model with one estimator in EconML to construct the counterfactual outcomes estimation.
Specifically, we use a regression model to fit treatments on outcomes, and then train the causal forest model to estimate the treatment effects. Finally, we combine the causal forest and regression model to build the counterfactual outcomes.

\subsection{Linear Regression (LR)} \label{lr}
We develop a linear regression-based method as an additional baseline. Since LR cannot directly handle matrix data such as spatial point patterns, we transform all treatments ($z_t$) and outcomes ($Y_t$) into the number of events contained in the treatments and outcomes ($R({z_t})$ and $R({Y_t})$), $R()$ is the dimension reduction map defined in Eq. \eqref{eq12} in the main text. 
Therefore, the data used for LR are all scalar.
Subsequently, we use $R(z_1)$, $R(z_2)$,..., $R(z_M)$ to regress $R(Y_M)$, and the regression model obtained is denoted as $E[R(Y_M)|R(z_{\leq M})]$.
After obtaining the regression model, we replace the independent variables, $R(z_1)$, $R(z_2)$,..., $R(z_M)$ in the regression model $E[R(Y_M)|R(z_{\leq M})]$ with $c*log(R(z_1))$, $c*log(R(z_2))$,..., $c*log(R(z_M))$, and then input them into the model to obtain the predicted values $\hat{N}^\omega_M(F_H)$.
Similarly, we can obtain estimates at time $M+1$.
We use $R(z_1)$, $R(z_2)$,..., $R(z_{M+1})$ to regress $R(Y_{M+1})$, and the regression model obtained is denoted as $E[R(Y_{M+1})|R(z_{\leq M+1})]$.
After obtaining the regression model, we replace the independent variables, $R(z_1)$, $R(z_2)$,..., $R(z_{M+1})$ in the regression model $E[R(Y_{M+1})|R(z_{\leq M+1})]$ with $c*log(R(z_1))$, $c*log(R(z_2))$,..., $c*log(R(z_{M+1}))$, and then input them into the model to obtain the predicted values $\hat{N}^\omega_{M+1}(F_H)$.
In this way, we repeat the process until we obtain the estimate at the final time T, $\hat{N}^\omega_{T}(F_H)$.
Then we can obtain estimates in the main text: $\hat{N}_\omega(F_H) = \frac{1}{T-M+1} \sum_{t=M}^{T} \hat{N}^\omega_t(F_H)$.
We use the torch.nn module in PyTorch to implement linear regression, with 10 epochs and a learning rate set to 0.00004.

\section{Limitations} \label{limitations}
In this section, we discuss the limitations of our work. 
We employ the unconfoundedness assumption to identify our estimands and prove the propositions. The unconfoundedness assumption excludes the influences of unmeasured covariates. In real-world scenarios, unmeasured covariates may exist thus this assumption may be violated. However, the unmeasured covariates are still an open problem \citep{pearl2010introduction}, and the unconfoundedness assumption is still widely used in causal inference. We should be open to the unconfoundedness assumption. 

%%%%%%%%%%%%%%%%%%%%%%%%%%%%%%%%%%%%%%%%%%%%%%%%%%%%%%%%%%%%%%%%%%%%%%%%%%%%%%%
%%%%%%%%%%%%%%%%%%%%%%%%%%%%%%%%%%%%%%%%%%%%%%%%%%%%%%%%%%%%%%%%%%%%%%%%%%%%%%%

\end{document}